\newtheorem{theorem}{Theorem}
\newtheorem{lemma}{Lemma}
\newtheorem{proposition}{Proposition}
\newtheorem{rem}{Remark}[section]
\newtheorem{defn}[theorem]{Definition}
\newcommand \bfmu{\boldsymbol \mu}
\newcommand \bfS{\boldsymbol \Sigma}
\newcommand{\norm}[1]{\left\Vert#1\right\Vert}
\renewcommand\@biblabel[1]{}
\newcommand{\bfX}{{\bf X}}
\newcommand{\bfx}{{\bf x}}
\title{Powerful Large Scale Inference in High Dimensional Mediation Analysis}
\author{Asmita Roy$^1$, Xianyang Zhang$^2$\footnote{Corresponding author: zhangxiany@stat.tamu.edu}\smallskip \\
	$^1$Department of Biostatistics, Johns Hopkins University\\
    $^2$Department of Statistics, Texas A\&M University}
\date{}
\begin{document}
\maketitle
\begin{abstract}
In genome-wide epigenetic studies, determining how exposures (e.g., Single Nucleotide Polymorphisms) affect outcomes (e.g., gene expression) through intermediate variables, such as DNA methylation, is a key challenge. Mediation analysis provides a framework to identify these causal pathways; however, testing for mediation effects involves a complex composite null hypothesis. Existing methods, such as Sobel's test or the Max-P test, are often underpowered in this context because they rely on null distributions determined under only a subset of the null space and are not optimized for the multiple testing burden inherent in high-dimensional data. To address these limitations, we introduce MLFDR (Mediation Analysis using Local False Discovery Rates), a novel method for high-dimensional mediation analysis. MLFDR leverages local false discovery rates, calculated from the coefficients of structural equation models, to construct an optimal rejection region. We demonstrate theoretically and through simulation that MLFDR asymptotically controls the false discovery rate and achieves superior statistical power compared to recent high-dimensional mediation methods. In real data applications, MLFDR identified 5\%–50\% more significant mediators than existing methods, demonstrating its ability to uncover biological signals missed by conventional approaches.
\end{abstract}

\section{Introduction}
Mediation analysis serves as a critical tool for deciphering the biological mechanisms underlying genetic associations with diseases identified in Genome-Wide Association Studies (GWAS). By bridging the gap between genetic variants and clinical outcomes, mediation analysis reveals intermediate pathways and elucidates causal relationships. As GWAS continues to uncover a vast number of genetic associations, translating these findings into actionable insights for precision medicine and therapeutic development becomes increasingly important. For instance, cigarette smoking is known to alter DNA methylation and gene expression \citep{maas2020smoking}; concurrently, DNA methylation often regulates gene expression directly \citep{dhar2021dna,moore2013dna}. Investigating the mediating effect of DNA methylation on gene expression—particularly in the presence of environmental exposures like smoking—is therefore essential. However, these analyses are complicated by high-dimensional outcomes and clinical confounders, such as patient age, which influences both gene expression and DNA methylation heterogeneity \citep{harris2017age,somel2006gene}. This article addresses the statistical challenges inherent in such high-dimensional mediation problems.

Historically, \citet{baron1986moderator} introduced the regression-based definition of mediation analysis, often referred to as the ``product of coefficients method,'' which examines the significance of the product of the exposure-mediator and mediator-outcome coefficients. More recently, the literature has expanded through the ``counterfactual framework'' \citep{robins1992identifiability, pearl2012causal, vanderweele2010odds, valeri2013mediation, vanderweele2016mediation, tchetgen2011causal, lange2011direct}, which provides a causal interpretation for natural direct and indirect effects across various models, including those with non-linearities and binary or survival outcomes.

Let $X$ denote the exposure, $M_i$ the $i$th mediator, and $Y$ the outcome. Under the product of coefficients approach, mediation analysis tests the null hypothesis $H_{0,i}: \alpha_i\beta_i = 0$, where $\alpha_i$ represents the effect of $X$ on $M_i$, and $\beta_i$ represents the effect of $M_i$ on $Y$. This creates a composite null hypothesis comprising three distinct cases: (i) $\alpha_i = 0, \beta_i \neq 0$; (ii) $\alpha_i \neq 0, \beta_i = 0$; or (iii) $\alpha_i = 0, \beta_i = 0$. Assuming no unmeasured confounders, classical tests like Max-P \citep{mackinnon2002comparison} and Sobel’s test \citep{sobel1982asymptotic} are known to be conservative under case (iii), as statistical inference is typically derived from distributions determined by cases (i) and (ii). In genome-wide studies, however, the sparse nature of omics data implies that $\alpha_i = 0$ and $\beta_i = 0$ hold for the majority of markers. Recent methods such as JS-mixture (HDMT) \citep{Dai2022} and DACT \citep{Liu2022} attempt to address this by explicitly modeling the composite nature of the null. JS-mixture improves power by using a mixture-null distribution of maximum p-values, adapting \citet{storey2002direct}'s procedure to estimate component proportions. DACT estimates the proportions of null $\alpha_i$ and $\beta_i$ separately to combine case-specific p-values. However, \citet{yang2025causal} recently demonstrated that DACT suffers from False Discovery Rate (FDR) inflation under dense alternatives and proposed a modified version (MDACT) that computes the statistic's distribution via numerical integration to improve p-value accuracy.

While JS-mixture and MDACT offer improvements over classical methods, they are not theoretically optimal regarding power. The FDR literature is broadly divided into p-value-based and local FDR-based rejection regions. Local FDR, a Bayesian approach, ranks hypotheses by the posterior probability that a case is null given the observed statistics; this ranking often differs from that based on p-values. \citet{Sun2007} demonstrated that, except in cases of symmetric alternatives, local FDR and p-value-based orderings diverge. Furthermore, \citet{Sun2007} proved that the local FDR-based oracle procedure is optimal: among all methods controlling the marginal FDR (mFDR), the local FDR approach yields the highest number of rejections. While the power advantage is negligible for symmetric alternatives, it becomes significant when the alternative distribution is asymmetric. Motivated by these theoretical properties, we propose MLFDR, a local FDR-based screening algorithm designed specifically for high-dimensional mediation analysis. Our contributions to the literature are as follows:

\begin{enumerate}
    \item We extend the concept of local FDR to the composite null hypothesis setting, deriving a screening rule with a closed-form expression for the corresponding false discovery proportion (FDP).
    \item We validate the method across a diverse array of data types—including continuous and binary variables, with and without confounders or exposure-mediator interactions—demonstrating robust performance across various model specifications. We also illustrate the method's efficacy in multiple mediator setups with univariate, clinical, or survival outcomes.
    \item MLFDR offers optimal power improvement over existing methods while maintaining asymptotic FDR control. Extensive simulations confirm its superiority over MDACT and HDMT in terms of power and error rate control.
    \item We provide theoretical guarantees for the identifiability and global optimality of our model under relatively mild assumptions, proving FDR control for both the oracle and adaptive procedures.
\end{enumerate}

The remainder of this paper is organized as follows. Section \ref{methodoverview} outlines the screening procedure for detecting significant mediators. Section \ref{sim} presents simulation studies. Section \ref{sec:extension} discusses extensions of MLFDR to composite alternatives and latent factor models, which can account for unmeasured confounding and pleiotropy. Section \ref{realdata} provides an in-depth analysis of Prostate Cancer data and Lung cancer data from The Cancer Genome Atlas (TCGA), exploring SNP-CpG-gene expression pathways and causal pathways between smoking habits and gene expression, respectively. Section \ref{Methods} details the methodology. An R package implementing the method is available at \url{https://github.com/asmita112358/MLFDR} as well as in CRAN. Theorems proving the large-sample FDR control of MLFDR are provided in the Supplementary Materials.

Finally, we distinguish our approach from other recent efforts in high-dimensional mediation analysis. \citet{zhang2021mediation}, \citet{yu2021high}, and \citet{perera2022hima2} address the multiple mediator problem specifically for survival outcomes. Closer to our framework, \citet{sun2023testing} and \citet{ding2024amdp} utilize local FDR-based rejection regions; the former approximates the alternative as a mixture of Gaussian distributions, while the latter constructs regions based on p-values. Our work advances this domain in two specific aspects: (i) we incorporate a general prior for the coefficients $\alpha$ and $\beta$ to estimate the \textit{exact} posterior density for computing the local FDR, rather than relying on approximations; and (ii) we offer theoretical guarantees for the local FDR estimates obtained via the EM algorithm, a property not previously established in this context.

\section{Method Overview}\label{methodoverview}
This section outlines the workflow of MLFDR; a schematic representation of the framework is provided in Figure \ref{fig:schematic}. Consider a study involving $n$ independent samples. For each testing unit $i = 1, \dots, m$, we observe an exposure variable $X_i$, a mediator $M_i$, and an outcome $Y_i$. Biologically, these variables may represent distinct contexts: for example, $X$ may denote a patient's smoking history (shared across $i$), with $\{M_i\}_{i=1}^m$ representing CpG methylation sites and $\{Y_i\}_{i=1}^m$ representing gene expression levels. Alternatively, the analysis may focus on the functional impact of Single Nucleotide Polymorphisms ($X_i$) on gene expression ($Y_i$) as mediated by CpG methylation ($M_i$) \citep{Dai2022}. 

The mediation model posits that the exposure $X_i$ influences the outcome $Y_i$ through the intermediate variable $M_i$, rather than solely through a direct relationship. We denote the coefficient for the exposure-mediator relationship ($X_i \to M_i$) as $\alpha_i$, and the coefficient for the mediator-outcome relationship ($M_i \to Y_i$) as $\beta_i$. In Figure \ref{fig:schematic}, solid arrows indicate these direct effects. 

We aim to test the composite null hypothesis against the alternative for each unit $i$:
\begin{equation}
    H_{0,i}: \alpha_i\beta_i = 0 \quad \text{versus} \quad H_{1,i}: \alpha_i\beta_i \neq 0, \quad i=1,2,\dots,m.
\end{equation}
The composite null hypothesis $H_{0,i}$ can be decomposed into three disjoint component nulls, $H_{0,i}=H_{00,i}\cup H_{01,i}\cup H_{10,i}$, defined as:
\begin{equation}
    \begin{aligned}
     H_{00,i} &: \alpha_i = 0 \text{ and } \beta_i = 0, \\
     H_{10,i} &: \alpha_i \neq 0 \text{ and } \beta_i = 0, \\
     H_{01,i} &: \alpha_i = 0 \text{ and } \beta_i \neq 0,
    \end{aligned}
\end{equation}
for $i = 1,2,\dots,m.$


\begin{figure}[ht]
    \centering
    \resizebox{16cm}{!}{
   \input{fig2}
   }
    \caption{Schematic diagram of MLFDR.}
    \label{fig:schematic}
\end{figure}

We consider a mixture prior for $(\alpha_i, \beta_i)$, where the probability of each disjoint component nulls $H_{00}, H_{10}$ and $H_{01}$ occur with probability $\pi_{00}, \pi_{10}$ and $\pi_{01}$ respectively. The marginal prior distributions of $\alpha_i$ and $ \beta_i$, respectively, are degenerate zero under the null and follow a normal prior with an unknown mean and variance under the alternative. The marginal distribution of the least squares coefficient estimates $\{\hat{\alpha}_i, \hat{\beta}_i\}$ given the latent states is computed, and the unknown parameters including the null proportions are estimated using EM algorithm.

 Using these estimates, we compute the local false discovery rate (local FDR) for each coefficient pair, denoted as $\widehat{\text{lfdr}}_{i}$ for $i = 1, \dots, m$. As the local FDR represents the posterior probability that the $i$-th hypothesis is null given the observed statistics, a lower value indicates stronger evidence against the null. Consequently, we define the rejection region for the composite null hypothesis as $\widehat{\text{lfdr}}_i \leq \delta$. The threshold $\delta$ is determined adaptively using the step-up procedure proposed by \citet{Sun2007}. The complete algorithm is detailed in the Methods section.

Additionally, we introduce an extended algorithm (MLFDR2) which can deal with scenarios where the marginal priors of  $\alpha_i$ and $\beta_i$ follow mixture normal distributions under the alternative. A composite alternative leads to a joint distribution of $\{\hat{\alpha}_i, \hat{\beta}_i\}$ with more than 4 mixture components, which can often be computationally burdensome. We introduce a two-step EM alogrithm that estimates the parameters of the marginal distributions of $\{\hat{\alpha}_i\}$ and $\{\hat{\beta}_i\}$ in the first step, then uses these estimates to run another EM algorithm that computes the probabilities of each mixture component.

 We also discuss another extension using Surrogate Variable Analysis (\cite{leek2007capturing}) which can account for unmeasured confounders and pleiotropy in the model. Details are presented in Section \ref{sec:extension}.

\section{Simulation Studies}\label{sim}
We evaluate the performance of MLFDR through extensive simulations under two distinct mixture proportion scenarios: a \textit{dense} alternative and a \textit{sparse} alternative. Following the setups in \citet{Dai2022}, the latent class probabilities are defined as:
\begin{itemize}
    \item \textbf{Dense alternative:} $(\pi_{00}, \pi_{10}, \pi_{01}, \pi_{11}) = (0.4, 0.2, 0.2, 0.2).$
    \item \textbf{Sparse alternative:} $(\pi_{00}, \pi_{10}, \pi_{01}, \pi_{11}) = (0.88, 0.05, 0.05, 0.02).$
\end{itemize}

We consider sample sizes of $n \in \{100, 300\}$ and fix the number of mediators at $m = 1000$. The parameter controlling the signal strength of mediation, $\tau$, varies from $0.1$ to $1.9$ in increments of $0.2$. The non-zero coefficients are generated as $\alpha_i = 0.05\tau + h_i$ (under $H_{10}$ and $H_{11}$) and $\beta_i = -0.5\tau + g_i$ (under $H_{01}$ and $H_{11}$), where the noise terms follow $h_i \sim N(0, 1/n)$ and $g_i \sim N(0, 4/n)$.

We compare the empirical FDR and power of MLFDR against two competing methods: MDACT \citep{yang2025causal} and HDMT (JS-mixture) \citep{Dai2022}. The simulation settings are detailed below.

\begin{enumerate}
    \item \textbf{Linear Model.} The exposure is univariate with $X \sim \text{Ber}(0.1)$.
    \begin{equation}
    \begin{aligned}
    M_i &= X\alpha_i + e_i, \\
    Y_i &= M_i\beta_i + X\gamma_i + \epsilon_i,
    \end{aligned}
    \end{equation}
    where $\gamma_i \sim N(1, 0.5)$, and error terms $e_i, \epsilon_i \stackrel{i.i.d.}{\sim} N(0,1)$. The results for this setting are summarized in Figure \ref{lm}.

    \item \textbf{Linear Model with measured Confounder.} 
    The exposure is univariate with $X \sim \text{Ber}(0.1)$. We introduce a confounder $Z \sim N(0,1)$:
    \begin{equation}
    \begin{aligned}
    M_i &= X\alpha_i + \theta_iZ + e_i,\\
    Y_i &= M_i\beta_i + X\gamma_i + \delta_iZ + \epsilon_i,
    \end{aligned}
    \end{equation}
    where the confounder effects are drawn independently from $\theta_i, \delta_i \sim U(0, 0.5)$. The results are presented in Figure \ref{lm_Z}.

    \item \textbf{Binary Outcome.} The exposure is univariate with $X \sim \text{Ber}(0.1)$. The outcome $Y_i$ is binary:
    \begin{equation}
    \begin{aligned}
    M_i &= X\alpha_i + e_i,\\
    \text{logit}\{\mathbb{P}(Y_i=1)\} &= M_i\beta_i + X\gamma_i.
    \end{aligned}
    \end{equation}
    The results are displayed in Figure \ref{binY}.
\end{enumerate}
 
Across all settings, the three methods demonstrated satisfactory FDR control. However, MLFDR consistently exhibited the highest power. Specifically, MLFDR achieved an average power improvement of $10.83\%$ over MDACT and $12.23\%$ over HDMT under dense alternatives. Under sparse alternatives, MLFDR maintained its advantage with an average improvement of $7.47\%$ over MDACT and $8.51\%$ over HDMT.

\begin{figure}[ht]
    \centering
    \includegraphics[width=\linewidth]{simulation_lm.png}
    \caption{FDR and power comparison for the linear model (Setting 1). Results are displayed for both sparse and dense alternatives. Gray ribbons indicate error margins.}
    \label{lm}
\end{figure}

\begin{figure}[ht]
    \centering
    \includegraphics[width=\linewidth]{simulation_Z.png}
    \caption{FDR and power comparison for the linear model with measured confounders (Setting 2). Results are displayed for both sparse and dense alternatives. Gray ribbons indicate error margins.}
    \label{lm_Z}
\end{figure}

\begin{figure}[ht]
    \centering
    \includegraphics[width=\linewidth]{simulation_binY.png}
    \caption{FDR and power comparison for the linear model with binary outcomes (Setting 3). Results are displayed for both sparse and dense alternatives. Gray ribbons indicate error margins.}
    \label{binY}
\end{figure}

\begin{figure}[ht]
    \centering
    \includegraphics[width=\linewidth]{simulation_ME.png}
    \caption{FDR and power comparison for the linear model with unknown mediator-exposure interactions. Results are displayed for both sparse and dense alternatives. Gray ribbons indicate error margins.}
    \label{ME}
\end{figure}

\begin{figure}[ht]
    \centering
    \includegraphics[width=\linewidth]{simulation_pl.png}
    \caption{FDR and power comparison for the linear model with unmeasured confounders. Results are displayed for both sparse and dense alternatives. Gray ribbons indicate error margins.}
    \label{pl}
\end{figure}

\section{Extensions} \label{sec:extension}
\subsection{Composite Alternatives}
In this setting, the coefficients $(\alpha, \beta)$ follow a Gaussian mixture distribution. The posterior distribution of the estimated coefficients is given by:
\begin{align*}
    \sqrt{n}\hat{\alpha}_i &\sim p_0 N(0, \sigma_{i1}^2) + p_1N(\mu_1, \sigma_{i1}^2 + \kappa_1) + p_2N(\mu_2, \sigma_{i1}^2 + \kappa_2),\\
    \sqrt{n}\hat{\beta}_i &\sim q_0 N(0, \sigma_{i2}^2) + q_1N(\theta_1, \sigma_{i2}^2 + \psi_1) + q_2N(\theta_2, \sigma_{i2}^2 + \psi_2).
\end{align*}
Under this framework, the null hypothesis corresponds to a mixture of 5 bivariate Gaussian distributions, while the alternative hypothesis comprises a mixture of 4 bivariate Gaussian distributions. The parameters are estimated using a two-step EM algorithm, the details of which are provided in the Supplementary Materials.

In our simulations, we set the mixture weights to $(p_0, p_1, p_2) = (0.54, 0.18, 0.28)$ and $(q_0, q_1, q_2) = (0.6, 0.05, 0.35)$. The variance parameters were fixed at $\kappa_1 = 1, \kappa_2 = 2, \psi_1 = 1.5, \psi_2 = 2$. The mean parameters were defined as functions of the mediation signal strength $\tau$: $\mu_1 = 0.05\tau$, $\mu_2 = -0.5\tau$, $\theta_1 = 0.9\tau$, and $\theta_2 = -0.01\tau$, where $\tau$ ranges from $0.1$ to $1.9$ in increments of $0.2$.

Figure \ref{composite_alt} presents the results for $n \in \{100, 300\}$ with $m = 1000$. All three methods maintained satisfactory FDR control, with MLFDR demonstrating the highest power.

\begin{figure}[ht]
    \centering
    \includegraphics[width=\linewidth]{simulation_comp_alt.png}
    \caption{FDR and power comparison for composite alternatives. Results are displayed for varying degrees of mediation ($\tau$). Gray ribbons indicate error margins.}
    \label{composite_alt}
\end{figure}

\subsection{Latent factors}
Unmeasured latent factors may be addressed by surrogate variable analysis (\cite{leek2007capturing}). Briefly, surrogate variable analysis considers the following model:
\begin{equation}
M_i = \mu_i + \alpha_i X + \phi_i Z + \sum_{l = 1}^L \gamma^{(i)}_l g_l+e_i,
\end{equation}
where $X,Z$ are measured covariates, and $g_1, g_2, \ldots g_L$ are unmeasured latent factors. Surrogate variable analysis produces a set of $K$ mutually orthogonal vectors $\hat{u}_1, \hat{u}_2, \ldots, \hat{u}_K$ (where $K \le L$), which span the same linear space as the latent factors. Thus, the original equation may be re-written as:
\begin{align*}
    M_i &= \mu_i + \alpha_i X + \phi_i Z + \sum_{k = 1}^K \lambda_k^{(i)}\hat{u}_k + e_i.
\end{align*}
These estimated factors, collected into a matrix $\hat{\mathbf{U}}_M$, account for unmeasured confounding in the exposure-mediator relationship.

For the mediator-outcome relationship, the latent factors may be modeled as follows:
\begin{align}\label{sva1}
    Y_i = \nu_i + \beta_i M_i + \gamma_i X+\delta_iZ + \sum_{l = 1}^L \eta^{(i)}_l g_l + \epsilon_i.
\end{align}
In this setting, the latent terms may account for: 1) unmeasured confounding; 2) measured confounders with unknown relationships to the outcome (e.g., global batch effects); and 3) pleiotropy, where $Y_i$ is influenced by mediators other than $M_i$. 

Direct application of SVA to model (\ref{sva1}) would require estimating surrogate variables for each mediator-outcome pair iteratively, leading to a computational bottleneck. To address this, we propose a global factor adjustment. We estimate a second set of surrogate variables, $\hat{\mathbf{U}}_Y$, based on the outcome null model (excluding mediators):
\begin{align}\label{sva2}
    Y_i = \nu_i + \gamma_i X + \delta_i Z + \sum_{l = 1}^L \eta^{(i)}_l g_l + \epsilon_i.
\end{align}
We then use the combined set of latent factors $\hat{\mathbf{U}}_M$ (derived from mediators) and $\hat{\mathbf{U}}_Y$ (derived from outcome residuals) to model the mediator-outcome relationship. The validity of using $\hat{\mathbf{U}}_Y$ from (\ref{sva2}) relies on the assumption that any single mediator $M_i$ contributes a relatively small amount of variance to the global outcome matrix $\textbf{Y}$. 

The full procedure is summarized in Algorithm \ref{alg:global_adjustment}. We implemented surrogate variable analysis via the Bioconductor package \texttt{sva} (\cite{leek2012sva}).

We apply this method to two data generating scenarios to demonstrate its performance.

\noindent
\textbf{Unknown Mediator-Exposure Interactions.}   
\begin{equation}
    \begin{aligned}
    M_i &= \alpha_iX +\delta_i Z + e_i,\\
    Y_i &= M_i\beta_i + X\gamma_i + \zeta_iZ + X\sum_{j \in S}\theta_jM_j + \epsilon_i,
    \end{aligned}
    \end{equation}
    where $S$ is a randomly selected subset of indices $\{1,\ldots, m\}$ with $|S| = 20$, treated as unknown during model fitting. The results are shown in Figure \ref{ME}.
    
\noindent
\textbf{Unmeasured Confounding and Pleiotropy.} 
    \begin{equation}
        \begin{aligned}
            M_i &= X\alpha_i + \theta_i Z + 0.4 Z_1 + 0.5Z_2 + e_i,\\
            Y_i &= M_i\beta_i + X\gamma_i + \delta_iZ - 0.5Z_1 + \sum_{j \in S}M_j\kappa_j + \epsilon_i,
        \end{aligned}
    \end{equation}
where $Z_1$ and $Z_2$ represent unmeasured confounders not included in the model fitting. $S$ is a randomly selected subset of indices $\{1,\ldots, m\}$ with $|S| = 20$. The term $\sum_{j \in S}M_j\kappa_j$ represents dense pleiotropy (the effect of other mediators on $Y_i$), which acts as an additional source of unmeasured variation. The results are shown in Figure \ref{pl}.

All methods appear to have satisfactory FDR control. MLFDR is uniformly more powerful than HDMT and MDACT in all cases, with better margin of improvement in dense alternatives.

\section{Real Data Analysis}\label{realdata}
\subsection{TCGA Prostate Cancer Data}
We apply MLFDR to the Prostate Cancer dataset from The Cancer Genome Atlas (TCGA), previously analyzed by \citet{Dai2022}. The study involves mediation analysis for 147 prostate cancer risk SNPs, integrated with DNA methylation and gene expression data from 495 samples. For each risk SNP, we identified CpG methylation probes within a 500 kb window and recorded the gene expression levels for the corresponding probes. This resulted in $m = 69,602$ SNP-CpG-Gene triplets for mediation testing.

In the first stage, we regressed CpG methylation on the SNPs, adjusting for the top 3 principal components (PCs) of genotypes, the top 15 PCs of CpG methylation, age at diagnosis, and pathological stage. From this, we obtained the slope estimates, variances, and p-values for the SNPs. In the second stage, gene expression was regressed on CpG methylation, conditional on the same set of covariates.

The estimated null proportion components $(\pi_{00}, \pi_{10}, \pi_{01})$ were $(0.51, 0.033, 0.41)$ for HDMT, compared to $(0.39, 0.004, 0.59)$ obtained via the EM algorithm in MLFDR.

Due to the wide spread of the methylation coefficients ($\beta$), we fitted a composite alternative Gaussian mixture model. The number of components, $d_2 = 8$, was selected based on the Akaike Information Criterion (AIC) (Figure \ref{TCGA_AIC}). Conversely, the SNP coefficients ($\alpha$) exhibited a narrower range ($-0.2$ to $0.4$) and were adequately modeled using a $d_1 = 2$ component Gaussian mixture.

\begin{figure}[H]
    \centering
    \includegraphics[width=0.7\linewidth]{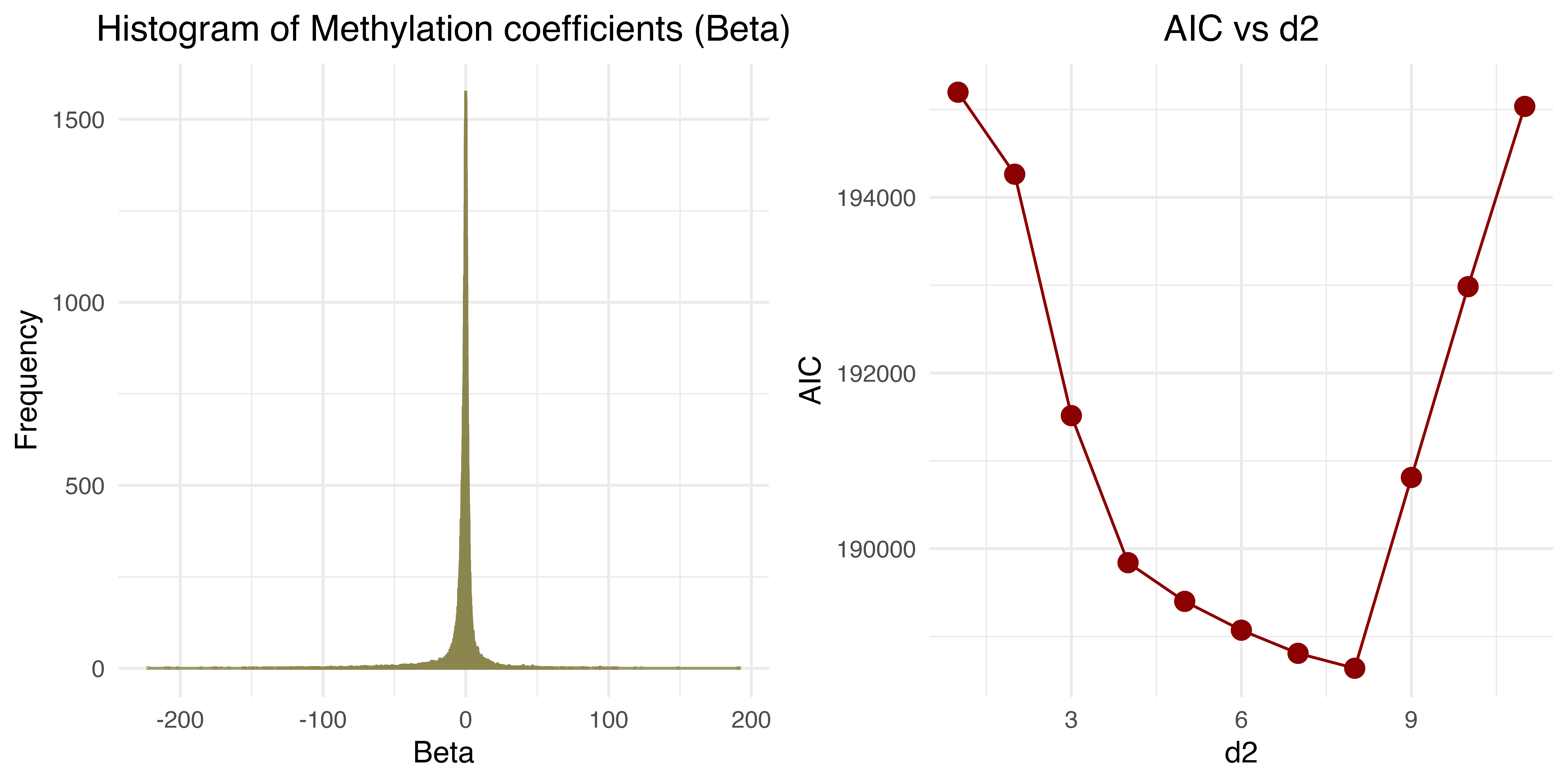}
    \caption{AIC for methylation coefficients when a $d_2 + 1$ component Gaussian mixture model was fit.}
    \label{TCGA_AIC}
\end{figure}

At an FDR threshold of $0.01$, HDMT identified 137 triplets, MLFDR identified 187, and MDACT identified 180. Figure \ref{venn} displays a Venn diagram of the overlapping discoveries, along with the number of rejections across FDR cutoffs ranging from $0.001$ to $0.05$. MLFDR consistently detected more pathways than the competing methods.

\begin{figure}[H]
    \centering
    \includegraphics[width=0.8\linewidth]{TCGA_Venn_rejplot.png}
    \caption{SNP-CpG-Gene triplets identified by MDACT, MLFDR, and HDMT out of 69,602 tests. The Venn diagram (left) corresponds to an FDR cutoff of 0.01.}
    \label{venn}
\end{figure}

Table \ref{tcga_tab} lists 10 additional pathways identified by MLFDR (but missed by HDMT or MDACT), ranked by local FDR. Notably, six of these triplets involve rs12653946, a known prostate cancer risk variant that influences the expression of \textit{IRX4}, a tumor suppressor gene in the prostate \citep{ha2012irx4}. Another pathway involves rs7767188, a risk SNP associated with prostate cancer through the expression of \textit{TRIM26} \citep{wu2020integrative}.

\begin{table}[ht]
\centering
\scalebox{0.9}{
\begin{tabular}{llllrr}
\hline
SNP & CpG Probe & Annotated Gene & Target Gene & $p_{\max}$ & Local FDR \\ 
\hline
rs7767188 & cg02749752 & TRIM26 & TRIM26 & 0.0017 & 0.0034 \\ 
rs12653946 & cg12830271 & -- & IRX4 & 0.0019 & 0.0058 \\ 
rs3096702 & cg19609334 & TNXB & TNXB & 0.0019 & 0.0060 \\ 
rs3129859 & cg03520342 & HLA-DMA & HLA-DMA & 0.0019 & 0.0125 \\ 
rs12653946 & cg00085370 & -- & IRX4 & 0.0022 & 0.0067 \\ 
rs12653946 & cg07144328 & -- & IRX4 & 0.0022 & 0.0063 \\ 
rs12653946 & cg07278634 & -- & IRX4 & 0.0028 & 0.0077 \\ 
rs12653946 & cg06446548 & -- & NDUFS6 & 0.0028 & 0.0288 \\ 
rs12653946 & cg03225093 & -- & IRX4 & 0.0029 & 0.0083 \\ 
rs5945619 & cg10581449 & NUDT11 & NUDT11 & 0.0030 & 0.0077 \\ 
\hline
\end{tabular}
}
\caption{Top 10 additional pathways detected by MLFDR in TCGA Prostate Cancer Data (ranked by $p_{\max}$). These pathways were not detected by HDMT or MDACT.}
\label{tcga_tab}
\end{table}

These empirical results align with our simulation studies: HDMT yielded the fewest discoveries, followed by MDACT, with MLFDR providing the highest detection rate. We note that the improvement of MLFDR over MDACT is modest in this application. This is likely attributable to the symmetric distribution of $\alpha$ and $\beta$; as noted by \citet{Sun2007}, local FDR-based tests offer limited power gains over p-value-based methods when the alternative distribution is symmetric around the null.

\subsection{TCGA Lung Squamous Cell Carcinoma}

We further extend our analysis to the TCGA Lung Squamous Cell Carcinoma (LUSC) dataset to investigate the mediating role of CpG methylation in the relationship between smoking history and gene expression. The data were acquired using the R package \texttt{UCSCXenaTools} \citep{wang2019ucscxenatools}. We restricted the analysis to primary tumor samples, resulting in a sample size of $n = 379$ after preprocessing.

Smoking history was quantified by pack-years. Using the publicly available probe map data from the TCGA website, each CpG probe was mapped to the expression profiles of potentially multiple genes; each unique CpG-gene pair was treated as a distinct candidate mediation pathway.
The analysis proceeded in two stages. In the first stage, CpG methylation beta values were regressed on smoking history (pack-years). In the second stage, a multiple linear regression was fitted for each gene, including all CpG probes mapped to that gene as predictors. The coefficients and p-values for each specific CpG-gene pair were then extracted to form the mediation hypotheses. All models were adjusted for potential confounders, including sex and the age at initial diagnosis. In total, 319,761 CpG-gene pathways were evaluated.

At an FDR threshold of 0.01, HDMT identified 13 pathways, MDACT identified 25 pathways, and MLFDR identified 44 pathways (Figure \ref{LUSC}). The results highlight the increased power of MLFDR in detecting subtle mediation signals.
Table \ref{LUSCtab} details the top 10 additional pathways detected by MLFDR (ranked by local FDR) that were not identified by the competing methods. Several of these findings are supported by existing literature. For instance, \citet{wang2013wdr66} discuss the relevance of \textit{WDR66} in lung cancer progression, while \citet{chen2025construction} highlights the role of \textit{LY6K} as a potential therapeutic target in lung squamous cell carcinoma. \cite{yang2018t} links \textit{TCIRG1} to metastatic potential of hepatocellular carcinoma.

\begin{figure}[H]
    \centering
    \includegraphics[width=0.8\linewidth]{TCGA_LUSC_Venn_rejplot.png}
    \caption{Smoking-CpG-Gene pathways detected by MDACT, MLFDR, and HDMT out of 319,761 tests. The Venn diagram (left) displays overlaps at an FDR cutoff of 0.01, while the plot (right) shows detection counts across varying FDR thresholds.}
    \label{LUSC}
\end{figure}

\begin{table}[ht]
\centering
\begin{tabular}{rllrr}
  \hline
  Rank & Gene & CpG Probe & $p_{\max}$ & Local FDR \\ 
  \hline
  1 & HOXB6 & cg20591728 & $6.6 \times 10^{-6}$ & 0.0039 \\ 
  2 & MYLIP & cg04641165 & $1.7 \times 10^{-5}$ & 0.0040 \\ 
  3 & B3GALT2 & cg16712103 & $1.8 \times 10^{-5}$ & 0.0100 \\ 
  4 & ZNF287 & cg16964464 & $2.2 \times 10^{-5}$ & 0.0085 \\ 
  5 & TCIRG1 & cg20484322 & $3.6 \times 10^{-5}$ & 0.0131 \\ 
  6 & WDR66 & cg03560652 & $3.8 \times 10^{-5}$ & 0.0145 \\ 
  7 & ENO3 & cg07333510 & $4.5 \times 10^{-5}$ & 0.0139 \\ 
  8 & ADORA2B & cg21501163 & $5.8 \times 10^{-5}$ & 0.0197 \\ 
  9 & GNA14 & cg06617692 & $6.4 \times 10^{-5}$ & 0.0231 \\ 
  10 & LY6K & cg16809304 & $8.1 \times 10^{-5}$ & 0.0148 \\ 
  \hline
\end{tabular}
\caption{Top 10 additional pathways detected by MLFDR in the TCGA LUSC dataset (ranked by $p_{\max}$). These pathways were not detected by HDMT or MDACT.}
\label{LUSCtab}
\end{table}
To evaluate the FDR of the three methods, we selected a subset of the data with $m = 1,000$ tests, and permuted the samples to create a ``global null" scenario. For each permutation, the smoking-Cpg and CpG-gene expression model is fit, and the three methods are implemented. At a given permutation, if any method has non-zero rejections, the False Positive rate for that permutation is recorded as 1 for that method. This procedure is repeated over 100 permutations, and the number of rejections is recorded for FDR levels varying from 0.0001 to 0.2. Figure \ref{fig:globalnull} presents the proportion of cases in which each method came up with significant rejections under the global null. HDMT and MLFDR appear to have satisfactory control of the FDR, while MDACT appears to have some FDR inflation at low FDR levels.
\begin{figure}
    \centering
    \includegraphics[width=0.5\linewidth]{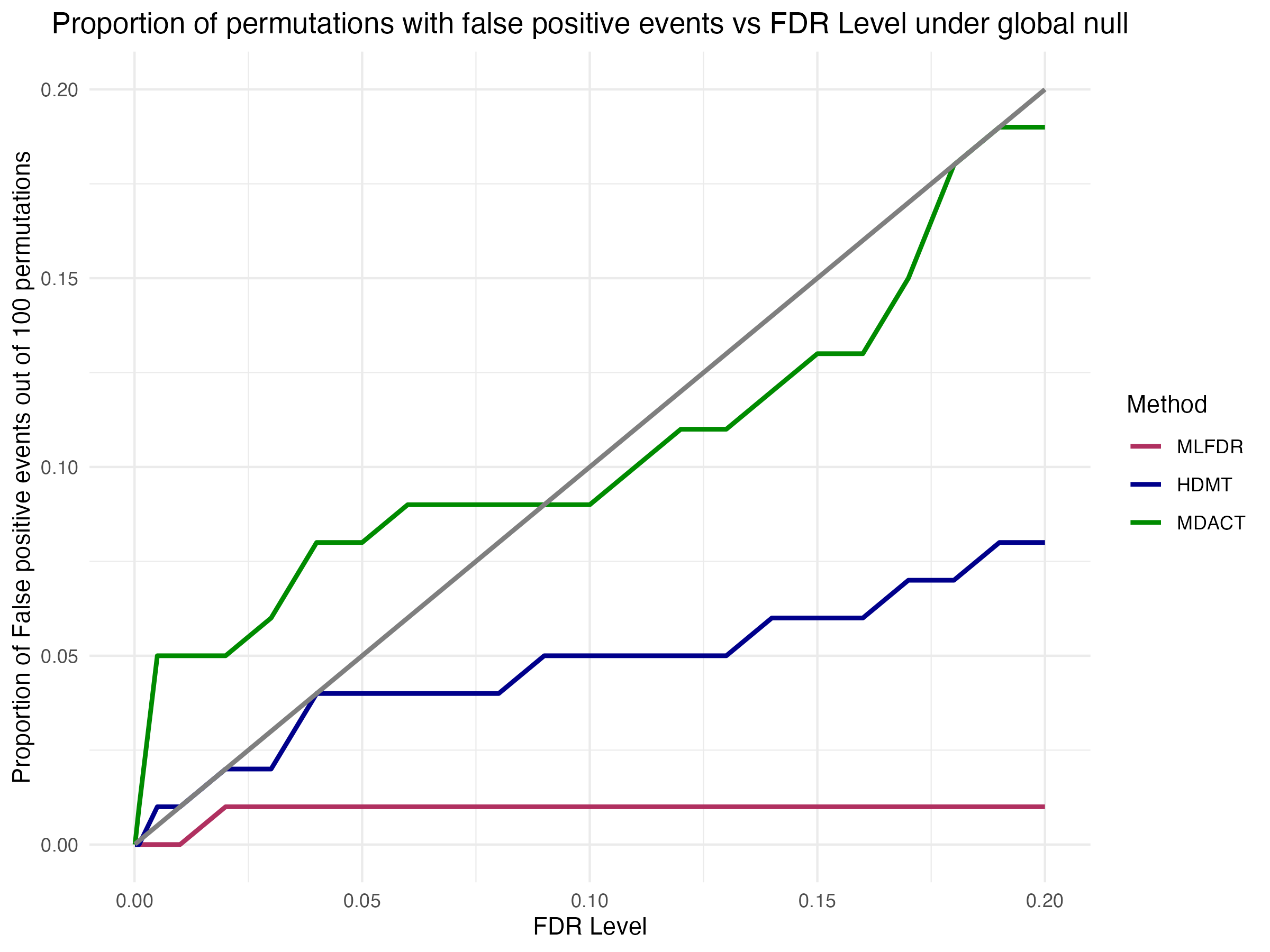}
    \caption{Smoking-CpG-gene pathways detected by all methods under the global null out of $m = 1,000$ tests. The proportion of cases reporting positive detection counts is reported across varying FDR thresholds. The gray solid line indicates target FDR levels.}
    \label{fig:globalnull}
\end{figure}

\section{Methods}\label{Methods}
\subsection{The Mediation Model}

Consider a univariate exposure $X$, a set of mediators $\{M_i\}_{i = 1}^m$, and $m$ outcomes $\{Y_i\}_{i = 1}^m$. We assume the following structural equation model:
\begin{equation}\label{sem}
    \begin{aligned}
    M_i &= X\alpha_i + e_i,\\
    Y_i &= M_i\beta_i + X\gamma_i +\epsilon_i,
    \end{aligned}
\end{equation}
where the error terms satisfy $(e_i,\epsilon_i)\perp (X,M_i)$ and are distributed as:
\begin{align*}
(e_i,\epsilon_i)\sim N\left(
\begin{pmatrix}
  0 \\
  0
\end{pmatrix},
\begin{pmatrix}
  \sigma_{i,a}^2 &0 \\
  0 & \sigma_{i,b}^2
\end{pmatrix}
\right).
\end{align*}
In settings with high-dimensional exposures, such as Single Nucleotide Polymorphisms (SNPs), we define the hypothesis based on the Exposure-Mediator-Outcome triplet $\{X_i, M_i, Y_i \}_{i = 1}^m$. In such cases, the common exposure $X$ in Equation (\ref{sem}) is replaced by the specific exposure $X_i$.

To describe the methodology, we focus on the formulation in Model (\ref{sem}). Given $n$ independent samples $\{(Y^j,X^j,M^j)\}^{n}_{j=1}$, our goal is to test the joint significance of $\alpha_i$ and $\beta_i$ for $1\leq i\leq m$.
Let $\mathbf{Y}_i=(Y_i^1,\dots,Y_i^n)'$, $\mathbf{X}=(X^1,\dots,X^n)'$, $\mathbf{M}_i=(M^1_i,\dots,M^n_i)'$, and define the projection matrix $\mathbf{P}=\mathbf{I}-\mathbf{X}(\mathbf{X}'\mathbf{X})^{-1}\mathbf{X}'$. The ordinary least squares (OLS) estimators are given by:
\begin{equation}\label{stat}
        \hat{\alpha}_{i} =(\mathbf{X}'\mathbf{X})^{-1}\mathbf{X}'\mathbf{M}_i, \quad 
        \hat{\beta}_{i} =(\mathbf{M}'_i\mathbf{P}\mathbf{M}_i)^{-1}\mathbf{M}'_i\mathbf{P}\mathbf{Y}_i.
\end{equation}
Conditional on $\mathbf{X}$ and $\mathbf{M}_i$, the estimators follow the distribution:
\begin{align*}
\sqrt{n}
\begin{pmatrix}
  \hat{\alpha}_{i}-\alpha_i \\
  \hat{\beta}_{i}-\beta_i
\end{pmatrix}
&=\sqrt{n}
\begin{pmatrix}
  (\mathbf{X}'\mathbf{X})^{-1}\mathbf{X}'\mathbf{e}_i \\
  (\mathbf{M}'_i\mathbf{P}\mathbf{M}_i)^{-1}\mathbf{M}'_i\mathbf{P}{\bf\epsilon}_i
\end{pmatrix} \\
&\overset{d}{=}
N\left(\begin{pmatrix}
         0 \\
         0
       \end{pmatrix},
\begin{pmatrix}
       \sigma_{i1}^2 & 0 \\
       0 & \sigma_{i2}^2
     \end{pmatrix}
     \right),
\end{align*}
where $\sigma_{i1}^2 = n\sigma_{i,a}^2(\mathbf{X}'\mathbf{X})^{-1}$ and $\sigma_{i2}^2= n\sigma_{i,b}^2 (\mathbf{M}_i'\mathbf{P}\mathbf{M}_i)^{-1}$. Note that $\hat{\alpha}_{i}$ and $\hat{\beta}_{i}$ are independent, as $\mathbb{E}[(\mathbf{X}'\mathbf{X})^{-1}\mathbf{X}'\mathbf{e}_i{\bf\epsilon}'_i\mathbf{P}\mathbf{M}_i(\mathbf{M}'_i\mathbf{P}\mathbf{M}_i)^{-1}]=0$. 
\begin{algorithm}[!h]
\caption{Two-Step Global Factor Adjustment for High-Dimensional Mediation}
\label{alg:global_adjustment}
\begin{algorithmic}[1]
\Require Outcome matrix $\mathbf{Y} \in \mathbb{R}^{n \times m}$, Mediator matrix $\mathbf{M} \in \mathbb{R}^{n \times m}$, Exposure vector $\mathbf{X} \in \mathbb{R}^{n \times 1}$, Covariates $\mathbf{Z} \in \mathbb{R}^{n \times q}$.
\Ensure Adjusted estimates for $\hat{\alpha}_i$ and $\hat{\beta}_i$ for $i=1,\dots,m$.

\Statex \textbf{Step 1: Estimate Latent Factors for Mediators ($\hat{\mathbf{U}}_M$)}
\State Regress $\mathbf{M}$ on $\mathbf{X}$ and $\mathbf{Z}$ to obtain the residual matrix $\hat{\mathbf{R}}_M$:
\[
\hat{\mathbf{R}}_M = \mathbf{M} - \left( \mathbf{X}\hat{\mathbf{A}} + \mathbf{Z}\hat{\mathbf{\Phi}} \right)
\]
\State Perform SVA on $\hat{\mathbf{R}}_M$ to extract the top $k_M$ principal components:
\[
\hat{\mathbf{U}}_M \leftarrow \text{SVA}(\hat{\mathbf{R}}_M, k_M)
\]

\Statex \textbf{Step 2: Estimate Latent Factors for Outcomes ($\hat{\mathbf{U}}_Y$)}
\State Regress $\mathbf{Y}$ on $\mathbf{X}$ and $\mathbf{Z}$ (excluding $\mathbf{M}$) to obtain the residual matrix $\hat{\mathbf{R}}_Y$:
\[
\hat{\mathbf{R}}_Y = \mathbf{Y} - \left( \mathbf{X}\hat{\mathbf{\Gamma}} + \mathbf{Z}\hat{\mathbf{\Delta}} \right)
\]
\State Perform SVA on $\hat{\mathbf{R}}_Y$ to extract the top $k_Y$ principal components:
\[
\hat{\mathbf{U}}_Y \leftarrow \text{SVA}(\hat{\mathbf{R}}_Y, k_Y)
\]

\Statex \textbf{Step 3: Mediation Analysis with Global Adjustment}
\For{$i = 1, \dots, m$}
    \State \textit{Mediator Model:} Regress $M_i$ on $\mathbf{X}$, $\mathbf{Z}$, and $\hat{\mathbf{U}}_M$:
    \[
    M_i = \mathbf{X}\alpha_i + \mathbf{Z}\phi_i + \hat{\mathbf{U}}_M \boldsymbol{\lambda}_{M,i} + e_i
    \]
    \State \textit{Outcome Model:} Regress $Y_i$ on $M_i$, $\mathbf{X}$, $\mathbf{Z}$, $\hat{\mathbf{U}}_M$, and $\hat{\mathbf{U}}_Y$:
    \[
    Y_i = M_i \beta_i + \mathbf{X}\gamma_i + \mathbf{Z}\delta_i + \hat{\mathbf{U}}_M \boldsymbol{\lambda}_{Y1,i} + \hat{\mathbf{U}}_Y \boldsymbol{\lambda}_{Y2,i} + \epsilon_i
    \]
    \State Extract coefficients $\hat{\alpha}_i, \hat{\beta}_i$ and their standard errors for MLFDR.
\EndFor
\end{algorithmic}
\end{algorithm}
We test the composite null hypothesis against the alternative:
\begin{equation}
    H_{0,i}: \alpha_i\beta_i = 0 \quad \text{versus} \quad H_{11,i}: \alpha_i\beta_i \neq 0, \quad i=1,2,\dots,m.
\end{equation}
Denote by $\boldsymbol{\xi}_i = (\xi_{i1}, \xi_{i2})$ the latent vector indicating the underlying truth of the $i$-th hypothesis, where $\xi_{i1} = \mathbf{1}\{\alpha_i \neq 0\}$ and $\xi_{i2} = \mathbf{1}\{\beta_i \neq 0\}$. The vector $\boldsymbol{\xi}_i$ takes values in the set $\{(0,0), (0,1), (1,0), (1,1)\}$. Let $m_{jk} = \sum^{m}_{i=1}\mathbf{1}\{\boldsymbol{\xi}_i=(j,k)\}$ represent the count of hypotheses in each state for $0\leq j,k\leq 1$.

We assume a prior distribution $P(\boldsymbol{\xi}_i=(j,k)) = \pi_{jk}$, where $\pi_{jk} \geq 0$ and $\sum_{j,k} \pi_{jk} = 1$. Conditional on the latent states, the non-zero effect sizes are modeled as Gaussian:
\[
\sqrt{n}\alpha_i \mid (\xi_{i1} = 1) \sim N(\mu, \psi) \quad \text{and} \quad \sqrt{n}\beta_i \mid (\xi_{i2} = 1) \sim N(\theta, \kappa).
\]
When $\xi_{i1} = 0$, $\alpha_i = 0$ (and analogously for $\beta_i$). Assuming that $\alpha_i$ and $\beta_i$ are independent conditional on $\boldsymbol{\xi}_i$, the marginal distribution of the coefficient estimates $(\hat{\alpha}_i, \hat{\beta}_i)$ given the latent states is:
\begin{align}\label{GMM-o}
\sqrt{n}
\begin{pmatrix}
\hat{\alpha}_i \\
\hat{\beta}_i
\end{pmatrix}
\,\Bigg|\, \boldsymbol{\xi}_i
\sim N\left(
\begin{pmatrix}
\mu\xi_{i1} \\
\theta\xi_{i2}
\end{pmatrix},
\begin{pmatrix}
\sigma_{i1}^2 + \psi\xi_{i1} & 0 \\
0 & \sigma_{i2}^2 + \kappa\xi_{i2}
\end{pmatrix}
\right),
\end{align}
where $(\sigma_{i1}^2/n, \sigma_{i2}^2/n)$ denote the variances of the estimates $(\hat{\alpha}_i, \hat{\beta}_i)$ conditional on $(\alpha_i, \beta_i)$.

We employ an Expectation-Maximization (EM) algorithm to estimate the unknown parameters $\boldsymbol{\Theta} = \{ \boldsymbol{\pi}, \mu, \theta, \psi, \kappa \}$; details are provided in Section 2 of the Supplementary Materials.
Although MLFDR is derived under the framework of Model (\ref{sem}), it is applicable to a broader range of settings, including binary outcomes, mediator-exposure interactions, and models with confounders. The extensive simulations presented earlier demonstrate the method's robustness across these varied scenarios.

\subsection{Extension: Composite Alternative}

In the last section we introduced the latent variable $\boldsymbol{\xi}_i = (\xi_{i1}, \xi_{i2})$ to characterize the underlying state of the hypothesis. We now extend this framework to a composite alternative setting, where the non-zero effects are drawn from mixture distributions. Specifically, we assume $(\alpha_i, \beta_i)$ are generated as follows:
\begin{align*}
    \sqrt{n}\alpha_i \mid \{\xi_{i1} = u\} &\sim N(\mu_u, \kappa_u),\quad u=0,1,\dots,d_1, \\
    \sqrt{n}\beta_i \mid \{\xi_{i2} = v\} &\sim N(\theta_v, \psi_v),\quad v=0,1,\dots,d_2.
\end{align*}
Here, the index $0$ denotes the null state, such that $\mu_0 = \theta_0 = \kappa_0 = \psi_0 = 0$ (i.e., a degenerate distribution at zero). As before, $\alpha_i$ and $\beta_i$ are assumed independent conditional on the latent state $\boldsymbol{\xi}_i$.

Marginalizing over the prior distribution, the joint distribution of the estimators $(\hat{\alpha}_i, \hat{\beta}_i)$ follows a Gaussian Mixture Model (GMM):
\begin{align}\label{joint-GMM}
\sqrt{n}
\begin{pmatrix}
\hat{\alpha}_i \\
\hat{\beta}_i
\end{pmatrix}
\sim \sum^{d_1}_{u=0}\sum^{d_2}_{v=0}\pi_{uv} N\left(
\begin{pmatrix}
        \mu_u \\
        \theta_v
       \end{pmatrix},
\begin{pmatrix}
       \sigma_{i1}^2 +\kappa_u & 0 \\
       0 & \sigma_{i2}^2 + \psi_v
     \end{pmatrix}
     \right),   
\end{align}
where $\pi_{uv} = P(\xi_{i1}=u, \xi_{i2}=v)$. In this context, the component null hypotheses map to specific index combinations: $H_{01,i}$ corresponds to $\{\xi_{i1}=0, \xi_{i2} \in \{1,\dots,d_2\}\}$, while the alternative $H_{11,i}$ corresponds to $\{\xi_{i1} \in \{1,\dots,d_1\}, \xi_{i2} \in \{1,\dots,d_2\}\}$.

The corresponding marginal distributions are given by:
\begin{equation}\label{marginal-GMM}
    \begin{aligned}
    \sqrt{n}\hat{\alpha}_i &\sim \sum_{u = 0}^{d_1}\pi_{u\cdot}N(\mu_u, \sigma^2_{i1} + \kappa_u),\\
    \sqrt{n}\hat{\beta}_i &\sim \sum_{v = 0}^{d_2}\pi_{\cdot v} N(\theta_v, \sigma^2_{i2} + \psi_v),  
    \end{aligned}
\end{equation}
where $\pi_{u\cdot} = \sum_v \pi_{uv}$ and $\pi_{\cdot v} = \sum_u \pi_{uv}$.

\subsubsection*{Two-Step EM Algorithm}
Directly fitting a $(d_1 + 1)(d_2 + 1)$-component bivariate GMM to estimate the joint probability matrix $\boldsymbol{\pi}$ is computationally intensive. To mitigate this burden, we propose a two-step Expectation-Maximization (EM) algorithm. 

In the first step, we estimate the parameters $\{\mu_u, \kappa_u\}_u$ and $\{\theta_v, \psi_v\}_v$ using the univariate marginal distributions described in Equation (\ref{marginal-GMM}). While the marginal mixing proportions $\pi_{u\cdot}$ and $\pi_{\cdot v}$ are insufficient to recover the joint distribution $\pi_{uv}$ (without assuming independence), the moment estimates remain valid. 

In the second step, we fix these mean and variance estimates and fit a constrained bivariate GMM to the joint data solely to estimate the joint mixing proportions $\pi_{uv}$. This approach significantly reduces the dimensionality of the optimization problem. Details of this algorithm are provided in Section 3 of the Supplementary Materials, and a supporting simulation is presented in Section \ref{sec:extension}.

This two-step approach is also applicable to the standard case where $d_1=d_2=1$. In Section 4 of the Supplementary Materials, we compare the standard bivariate EM against this two-step variant. The results indicate that while the two-step method offers substantial computational speedups, it incurs a slight reduction in power. Therefore, we recommend using the standard bivariate EM for simple cases ($d_1 = d_2 = 1$) and reserving the two-step EM for complex composite alternatives where computational efficiency is paramount.


 

\subsection{Step-Up Procedure Based on Local FDR}

\citet{Sun2007} demonstrated that for a simple null hypothesis, the oracle local FDR-based rejection region outperforms p-value-based thresholding. Specifically, it achieves a lower marginal False Non-discovery Rate (mFNR) while controlling the marginal False Discovery Rate (mFDR) at the same level. This advantage is particularly pronounced when the alternative distribution is asymmetric about the null. Motivated by these findings, we implement a local FDR-based step-up procedure to identify significant mediation pathways. 

We focus our discussion on the Gaussian Mixture Model described in (\ref{GMM-o}). The extension to the more general mixture model in (\ref{joint-GMM}) is straightforward. Conditional on the latent variables, the density function of the transformed statistics $\sqrt{n}(\hat{\alpha}_i,\hat{\beta}_i)$ under state $H_{jk,i}$ is given by:
\begin{equation}\label{mixture-pdfs}
\begin{aligned}
    f_{jk}(\cdot,\cdot) &:= \phi(\cdot,\cdot; \mu\delta_{j}, \theta \delta_{k}, \sigma_{i1}^2 + \kappa \delta_j, \sigma_{i2}^2 + \psi\delta_k),\\
    f &:= \sum_{j=0}^1 \sum_{k = 0}^1\pi_{jk}f_{jk},
\end{aligned}
\end{equation}
where $\delta_0=0$, $\delta_1=1$, and $\phi(\cdot,\cdot;\mu_1, \mu_2, \sigma_1^2, \sigma_2^2)$ denotes the bivariate normal density with mean vector $(\mu_1, \mu_2)$, variances $(\sigma_1^2, \sigma_2^2)$, and zero correlation.

For notational simplicity, let $a_i = \sqrt{n}\widehat{\alpha}_i$ and $b_i = \sqrt{n}\widehat{\beta}_i$. Under the composite null hypothesis, the joint local FDR is defined as:
\begin{align}
    \label{lfdr-1}
    \text{lfdr}(a_i, b_i) = \frac{\pi_{00}f_{00}(a_i, b_i) + \pi_{10}f_{10}(a_i, b_i) + \pi_{01}f_{01}(a_i, b_i)}{f(a_i, b_i)}.
\end{align}
This quantity can be estimated by substituting the parameter estimates obtained from the EM algorithm; we denote this estimate by $\widehat{\text{lfdr}}$. As the local FDR represents the posterior probability of the null hypothesis, a lower value indicates stronger evidence against the null. Consequently, we define the rejection region as $\widehat{\text{lfdr}}(a_i, b_i) \leq \delta$, where the threshold $\delta$ must be determined to control the error rate.

We assume the cumulative distribution function (CDF) of the joint local FDR is given by:
\begin{equation}\label{lfdr-dist}
    G(t) = \pi_{00}G_{00}(t) + \pi_{10}G_{10}(t)+ \pi_{01}G_{01}(t) + \pi_{11}G_{11}(t),
\end{equation}
where $G_{jk}(t)$ is the conditional CDF of $\text{lfdr}(a_i, b_i)$ under hypothesis $H_{jk}$:
\begin{align}
    G(t) &= \frac{1}{m}\sum_{i = 1}^m\mathbb{P}(\text{lfdr}(a_i, b_i) \leq t),\\
    G_{jk}(t) &= \frac{1}{m}\sum_{i = 1}^m\mathbb{P}(\text{lfdr}(a_i, b_i) \leq t \mid H_{jk}), \quad j,k \in \{0,1\}.
\end{align}

\subsubsection*{Oracle Procedure}\label{oracle}
The oracle procedure assumes that all parameters $\boldsymbol{\pi} = \{\pi_{00}, \pi_{10}, \pi_{01}, \pi_{11} \}$ and densities $f_{jk}$ are known. For a given threshold $\delta$, we define the following counting processes:
\begin{align}
    V_m(\delta) &= \sum_{i \in H_{00} \cup H_{10} \cup H_{01}} \mathbf{1}\{\text{lfdr}(a_i, b_i) \leq \delta\},\\
    R_m(\delta) &= \sum_{i = 1}^m \mathbf{1}\{\text{lfdr}(a_i, b_i) \leq \delta\},\\
    P_m(\delta) &= \sum_{i \in H_{11}} \mathbf{1}\{\text{lfdr}(a_i, b_i) > \delta\},\\
    W_m(\delta) &= \sum_{i = 1}^m \mathbf{1}\{\text{lfdr}(a_i, b_i) \leq \delta\} \cdot \text{lfdr}(a_i, b_i),
\end{align}
where $V_m(\delta)$ represents the number of false rejections, $R_m(\delta)$ the total number of rejections, and $P_m(\delta)$ the number of missed discoveries (false negatives).
We can express $V_m(\delta)$ as:
\begin{equation*}
    V_m(\delta) = \sum_{i = 1}^m \mathbf{1}\{\text{lfdr}(a_i, b_i) \leq \delta\} \cdot \mathbf{1}\{\boldsymbol{\xi_i} \in \{(0,0),(1,0),(0,1)\} \}.
\end{equation*}
Taking the expectation, we obtain:
\begin{equation}
    \begin{aligned}
    E[V_m(\delta)] &= \sum_{i = 1}^m E\left[\mathbf{1}\{\text{lfdr}(a_i, b_i) \leq \delta\} \cdot \mathbf{1}\{\boldsymbol{\xi_i} \in \{(0,0),(1,0),(0,1)\} \}\right]\\
    &= m_{00}G_{00}(\delta) + m_{10}G_{10}(\delta)+ m_{01}G_{01}(\delta).
\end{aligned}
\end{equation}
The mFDR at threshold $\delta$ is defined as:
\begin{equation}\label{mFDR}
    \widetilde{Q}(\delta) = \frac{E[V_m(\delta)]}{E[R_m(\delta)]} = \frac{\pi_{00}G_{00}(\delta) + \pi_{10}G_{10}(\delta)+ \pi_{01}G_{01}(\delta)}{G(\delta)}.
\end{equation}
This quantity can be empirically estimated by:
\begin{equation}\label{mFDR-emp}
    Q_m(\delta) = \frac{\sum_{i = 1}^m \mathbf{1}\{\text{lfdr}(a_i, b_i) \leq \delta\}\text{lfdr}(a_i, b_i)}{\sum_{i = 1}^m \mathbf{1}\{\text{lfdr}(a_i, b_i) \leq \delta\}}.
\end{equation}
In Proposition 1 of the Supplementary Materials, we prove that for a fixed $\delta$, the numerator and denominator of $Q_m(\delta)$ are unbiased estimators of the numerator and denominator of $\widetilde{Q}(\delta)$, respectively. The oracle rejection region for the composite null is defined as $\mathbf{1}\{\text{lfdr}(a_i, b_i) \leq \delta_m\}$, where the threshold is selected as:
\begin{equation}
    \delta_m = \sup\{t \in (0,1): Q_m(t) \leq \alpha\}.
\end{equation}

\subsubsection*{Adaptive Procedure}\label{adapt}

In practical applications, the true parameters $\boldsymbol{\pi} = \{\pi_{00}, \pi_{10}, \pi_{01}, \pi_{11} \}$ and the component densities $f_{jk}$ are unknown. Consequently, the true local FDR values in Equation (\ref{mFDR-emp}) are inaccessible. To address this, we substitute the unknown quantities with their estimates obtained via the EM algorithm. By plugging in these estimates, we obtain the estimated local FDR, denoted as $\widehat{\text{lfdr}}$, which yields an empirical estimate of $Q_m(\delta)$:
\begin{equation}\label{mFDR-est}
    \widehat{Q}_m(\delta) = \frac{\sum_{i = 1}^m \mathbf{1}\{\widehat{\text{lfdr}}(a_i, b_i) \leq \delta\}\widehat{\text{lfdr}}(a_i, b_i)}{\sum_{i = 1}^m \mathbf{1}\{\widehat{\text{lfdr}}(a_i, b_i) \leq \delta\}}.
\end{equation}
Accordingly, the data-adaptive threshold $\widehat{\delta}_m$ is defined as:
\[
\widehat{\delta}_m = \sup\{t \in (0,1): \widehat{Q}_m(t) \leq \alpha\}.
\]
The resulting procedure, which rejects the $i$-th hypothesis if $\widehat{\text{lfdr}}(a_i, b_i) \leq \widehat{\delta}_m$, is operationally equivalent to the step-up algorithm detailed in Algorithm \ref{alg:stepup}.
\begin{algorithm}[t]
    \caption{A data-adaptive procedure for finding the cutoffs in MLFDR}
  \label{alg:stepup}
  \begin{algorithmic}[1]
 \Require EM estimates for the parameters $\Theta = \{\pi, \mu, \theta, \psi, \kappa\}$
 \For{$i = 1, \dots, m$}
 \State  Compute $\widehat{\text{lfdr}}_i = \widehat{\text{lfdr}}(\sqrt{n}\hat{\alpha}_i, \sqrt{n}\hat{\beta}_i)$
 \EndFor
 \State Sort the estimated local FDR values in ascending order to obtain the set of order statistics $\Psi = \{\widehat{\text{lfdr}}_{(1)}, \widehat{\text{lfdr}}_{(2)}, \dots, \widehat{\text{lfdr}}_{(m)}\}$, where ties are broken at random.
 \State For each $k \in \{1, \dots, m\}$, compute the average estimated local FDR for the top $k$ statistics:
    \[
    \widehat{Q}_m(\widehat{\text{lfdr}}_{(k)}) = \frac{1}{k}\sum_{i = 1}^k\widehat{\text{lfdr}}_{(i)}.
    \]

\State Determine the cutoff index $k = \max\left\{j: \frac{1}{j}\sum_{i = 1}^j\widehat{\text{lfdr}}_{(i)} \leq \alpha\right\}$

\State Reject the null hypotheses corresponding to the smallest $k$ local FDR values, denoted as $H_{(1)}, H_{(2)}, \dots, H_{(k)}$
 
  \end{algorithmic}
\end{algorithm}

\section{Theoretical Properties}\label{theory}

To establish the asymptotic properties of MLFDR, we first define the limiting random variables $(\alpha_{i,0}, \beta_{i,0})$. Conditional on the latent state $\boldsymbol{\xi}_i=(\xi_{i1},\xi_{i2})$, these variables follow a Gaussian Mixture Model (GMM):
\begin{align}\label{GMM}
\left(\alpha_{i,0}, \beta_{i,0} \right) \sim N\left(\begin{pmatrix}
        \mu\xi_{i1} \\
        \theta\xi_{i2}
       \end{pmatrix},
\begin{pmatrix}
       \sigma_{i1,0}^2 +\kappa\xi_{i1} & 0 \\
       0 & \sigma_{i2,0}^2 + \psi\xi_{i2}
     \end{pmatrix}
     \right),
\end{align}
where the limiting variances are defined as $\sigma_{i1,0}^2=\sigma^2_{i,a}/\mathbb{E}[X^2]$ and $\sigma_{i2,0}^2=\sigma^2_{i,b} \mathbb{E}[X^2]/\{\mathbb{E}[X^2]\mathbb{E}[M^2_i]-(\mathbb{E}[XM_i])^2\}$.

Let $G^0(t)$ and $G^0_{jk}(t)$ denote the cumulative distribution functions of the local FDR based on these limiting variables:
\begin{align*}
    G^0(t) &= \frac{1}{m}\sum_{i = 1}^m\mathbb{P}(\text{lfdr}(\alpha_{i,0}, \beta_{i,0}) \leq t),\\
    G^0_{jk}(t) &= \frac{1}{m}\sum_{i = 1}^m\mathbb{P}(\text{lfdr}(\alpha_{i,0}, \beta_{i,0}) \leq t \mid H^0_{jk}), \quad \text{for}\quad j,k \in \{0,1\}.
\end{align*}
We further define the limiting function $V(\delta) := \pi_{00}G^0_{00}(\delta) + \pi_{10}G^0_{10}(\delta)+ \pi_{01}G^0_{01}(\delta)$. Our theoretical analysis relies on establishing the following pointwise convergence results:
\begin{equation}\label{pointwise-conv}
    \begin{aligned}
        \frac{1}{m}V_m(\delta) \overset{P}{\to} V(\delta), \quad
    \frac{1}{m}W_m(\delta) \overset{P}{\to}V(\delta),\quad
    \frac{1}{m} R_m(\delta)\overset{P}{\to}G^0(\delta).
    \end{aligned}
\end{equation}
Additionally, we define the theoretical and empirical marginal FDR (mFDR) processes as:
\begin{align}\label{mfdr-oracle-est}
     Q(\delta) = \frac{V(\delta)}{G^0(\delta)},\quad \widetilde{Q}_m(\delta) = \frac{V_m(\delta)}{R_m(\delta)\lor 1}.
\end{align}
Note that at a cutoff $\delta$, the FDR for the oracle procedure is given by $\text{FDR}_m(\delta) = \mathbb{E}[\widetilde{Q}_m(\delta)]$.

We assume the following conditions to prove asymptotic FDR control:

\noindent
\textbf{(A1) EM estimates are AMLE.} We assume the estimators satisfy the Approximate Maximum Likelihood Estimator (AMLE) property. An estimator $\hat{f}$ is an AMLE of a density $f$ based on samples $(\alpha_1, \beta_1), \dots, (\alpha_m, \beta_m)$ if:
\[ \prod_{i = 1}^m \frac{\hat{f}(\alpha_i, \beta_i)}{f(\alpha_i, \beta_i)} \geq 1. \]
This condition, previously utilized by \cite{deb2022two}, ensures that the estimator yields a likelihood at least as high as the true parameters. This property is essential for establishing the convergence of the estimated GMM density to the truth under the Hellinger distance, thereby ensuring the identifiability of the EM algorithm.

\noindent
\textbf{(A2) Independence of samples.} The observations $\{M^j_i\}_{j=1}^n$ are mutually independent, as are $\{X_j\}_{j=1}^n$.

\noindent
\textbf{(A3) Sub-Gaussianity.} The variables satisfy the following tail conditions: $M_i \sim \text{subG}(\nu_i^2)$, $M_iX \sim \text{subG}(\eta_i^2)$, and $X \sim \text{subG}(\rho^2)$. Note that if $M_i \sim \text{subG}(\nu_i^2)$, it follows that $(M_i^2 - \mathbb{E}M_i^2) \sim \text{subE}(16\nu_i^2)$ and $(X^2 - \mathbb{E}X^2) \sim \text{subE}(16\rho^2)$, where subG and subE denote sub-Gaussian and sub-exponential variables, respectively. A variable $Z$ is sub-Gaussian with parameter $\rho$ if $\mathbb{P}(|Z| \geq t) \leq 2\exp(-\frac{t^2}{2\rho^2})$. A variable $Y$ is sub-exponential with parameter $\eta$ if 
$$\mathbb{P}(|Y - \mathbb{E}Y| > t) \leq 2\exp\left(-\frac{1}{2}\min\left(\frac{t}{\eta}, \frac{t^2}{\eta^2}\right)\right).$$

\noindent
\textbf{(A4) Finite variance.} There exist constants $c, C$ such that $0 < c < \min_i \sigma_{ij,0}^2 \leq \max_i \sigma_{ij,0}^2 < C < \infty$ for $j=1,2$.

\noindent
\textbf{(A5) Critical point.} There exists a $\delta_\infty \in (0,1]$ such that $Q(\delta_\infty) < \alpha$.

Assumptions \textbf{(A2)}--\textbf{(A4)} enable the derivation of the uniform convergence of $\sigma_{i1}^2$ and $\sigma_{i2}^2$ to their limiting counterparts in (\ref{GMM}) (Lemma 1). This variance convergence is subsequently used to prove the pointwise convergence results in Lemma 2. Assumption \textbf{(A5)}, similar to conditions used in \cite{Cao2021}, ensures the existence of a valid cutoff that asymptotically controls the FDR at level $\alpha$.

We state the main theoretical results below. Theorem \ref{main-thm1} establishes FDR control for the oracle rule, while Theorem \ref{main-thm2} extends this control to the adaptive procedure.

\begin{theorem}\label{main-thm1}
    Under Assumptions \textbf{(A2)} -- \textbf{(A5)}, if $m,n \to \infty$ such that $m = o(\exp(kn))$ for some constant $k$, then:
\[ \limsup_{m \to \infty} \text{FDR}_m(\delta_m) \leq \alpha.\]
\end{theorem}

\begin{theorem}\label{main-thm2}
    Under Assumptions \textbf{(A1)} -- \textbf{(A5)}, if $m,n \to \infty$ such that $m = o(\exp(kn))$ for some constant $k$, then:
\[ \limsup_{m \to \infty} \text{FDR}_m(\widehat{\delta}_m) \leq \alpha.\]
\end{theorem}
Proofs and ancillary results are provided in the Supplementary Materials.

\section{Supplementary Materials}
\subsection{Properties of the Step-Up Procedure}
\begin{proposition}\label{stepup-1}
Let $G_{00}, G_{10}, G_{01}$ be the cumulative distribution functions defined previously. The following identity holds:
\begin{align*}
    \mathbb{E}&\left[\frac{1}{m} \sum_{i = 1}^m \mathbf{1}\{\text{lfdr}(\sqrt{n}\hat{\alpha}_i, \sqrt{n}\hat{\beta}_i) \leq \delta\}\,\text{lfdr}(\sqrt{n}\hat{\alpha}_i, \sqrt{n}\hat{\beta}_i)\right] \\
    = &\pi_{00}G_{00}(\delta) + \pi_{10}G_{10}(\delta)+ \pi_{01}G_{01}(\delta).
\end{align*}
\end{proposition}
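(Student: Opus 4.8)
The plan is to exploit the defining property of the local FDR as a posterior null probability, which is exactly what makes the weight $\text{lfdr}(a_i,b_i)$ collapse the mixture denominator $f$ against the marginal density once expectations are taken. First I would use linearity of expectation to reduce the claim to a per-coordinate statement,
\[
\mathbb{E}\!\left[\frac{1}{m}\sum_{i=1}^m I\{\text{lfdr}(a_i,b_i)\leq\delta\}\,\text{lfdr}(a_i,b_i)\right]
=\frac{1}{m}\sum_{i=1}^m \mathbb{E}\!\left[I\{\text{lfdr}(a_i,b_i)\leq\delta\}\,\text{lfdr}(a_i,b_i)\right],
\]
where $a_i=\sqrt{n}\widehat{\alpha}_i$ and $b_i=\sqrt{n}\widehat{\beta}_i$ have marginal density $f$ as in (\ref{mixture-pdfs}), the $i$-dependence entering only through $\sigma_{i1}^2$ and $\sigma_{i2}^2$.

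Next, for each fixed $i$ I would write the expectation as an integral against the marginal density and apply the key algebraic identity $\text{lfdr}(t)\,f(t)=\pi_{00}f_{00}(t)+\pi_{10}f_{10}(t)+\pi_{01}f_{01}(t)$, which is immediate from the definition (\ref{lfdr-1}). This yields
\[
\mathbb{E}\!\left[I\{\text{lfdr}(a_i,b_i)\leq\delta\}\,\text{lfdr}(a_i,b_i)\right]
=\int I\{\text{lfdr}(t)\leq\delta\}\big(\pi_{00}f_{00}+\pi_{10}f_{10}+\pi_{01}f_{01}\big)(t)\,dt.
\]
I would then split this into three pieces and observe that, since $f_{jk}$ is precisely the conditional density of $(a_i,b_i)$ given $H_{jk,i}$, each integral $\int I\{\text{lfdr}(t)\leq\delta\}f_{jk}(t)\,dt$ equals $\mathbb{P}(\text{lfdr}(a_i,b_i)\leq\delta\mid H_{jk})$. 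Crucially, the region $\{\text{lfdr}(t)\leq\delta\}$ over which we integrate is the \emph{same} under all three conditional laws, so no extra accounting is needed when regrouping. Averaging over $i$ and invoking the definitions $G_{jk}(\delta)=\frac{1}{m}\sum_{i=1}^m\mathbb{P}(\text{lfdr}(a_i,b_i)\leq\delta\mid H_{jk})$ then produces $\pi_{00}G_{00}(\delta)+\pi_{10}G_{10}(\delta)+\pi_{01}G_{01}(\delta)$, which is the assertion.

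This argument is essentially algebraic once the posterior-probability identity is in hand, so there is no genuine analytic obstacle. The only points requiring a little care are (i) confirming that the integrals are finite so that linearity and the interchange of sum and integral are legitimate, which holds because $\text{lfdr}\in[0,1]$ and each $f_{jk}$ is a bounded Gaussian density, and (ii) keeping track of the dependence of $f_{jk}$ on $i$ through $(\sigma_{i1}^2,\sigma_{i2}^2)$ so that the per-$i$ conditional probabilities align correctly with the averaged definitions of $G_{jk}$. The substantive step, rather than a difficulty, is recognizing that multiplying the thresholded indicator by $\text{lfdr}$ is exactly what cancels the denominator $f$ and isolates the three null components of the mixture.
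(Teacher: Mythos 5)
Your proposal is correct and follows essentially the same route as the paper's own proof: both write the expectation as an integral of $\mathrm{lfdr}\cdot f$ over the region $\{\mathrm{lfdr}\leq\delta\}$, cancel the denominator via $\mathrm{lfdr}\cdot f=\pi_{00}f_{00}+\pi_{10}f_{10}+\pi_{01}f_{01}$, and identify each resulting integral with the conditional probability defining $G_{jk}(\delta)$. The points of care you flag (boundedness of $\mathrm{lfdr}$, the $i$-dependence through $(\sigma_{i1}^2,\sigma_{i2}^2)$) are handled implicitly in the paper and your treatment is, if anything, slightly more explicit.
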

\begin{proof}
    Let $a_i = \sqrt{n}\hat{\alpha}_i$ and $b_i = \sqrt{n}\hat{\beta}_i$. Define the rejection region $S_i^n(\delta) = \{(a_i, b_i) : \text{lfdr}(a_i, b_i) \leq \delta \}$. By the linearity of expectation and the definition of the local FDR, we have:
\begin{align*}
    \mathbb{E}&\left[ \frac{1}{m}\sum_{i = 1}^m \mathbf{1}\{\text{lfdr}(a_i, b_i) \leq \delta\}\,\text{lfdr}(a_i, b_i)\right]\\
    &= \frac{1}{m}\sum_{i = 1}^m \mathbb{E}\left[\mathbf{1}\{\text{lfdr}(a_i, b_i) \leq \delta\}\,\text{lfdr}(a_i, b_i)\right]\\
    &= \frac{1}{m}\sum_{i = 1}^m \int_{S_i^n(\delta)}\text{lfdr}(a_i, b_i)f(a_i, b_i)\,d a_i \, d b_i\\
    &= \frac{1}{m}\sum_{i = 1}^m \int_{S_i^n(\delta)} \left( \sum_{j,k \in \{0,1\} \setminus \{1,1\}} \pi_{jk}f_{jk}(a_i,b_i) \right) \,d a_i \, d b_i\\
    &= \pi_{00}\frac{1}{m}\sum_{i = 1}^m \int_{S_i^n(\delta)} f_{00}(a_i,b_i)\,d a_i \, d b_i \\
    &\quad + \pi_{10}\frac{1}{m}\sum_{i = 1}^m \int_{S_i^n(\delta)} f_{10}(a_i,b_i)\,d a_i \, d b_i\\ 
    &\quad + \pi_{01}\frac{1}{m}\sum_{i = 1}^m \int_{S_i^n(\delta)} f_{01}(a_i,b_i)\,d a_i \, d b_i\\
    &= \pi_{00}\frac{1}{m}\sum_{i = 1}^m\mathbb{P}(\text{lfdr}(a_i, b_i) \leq \delta \mid H_{00})\\
    &\quad + \pi_{10}\frac{1}{m}\sum_{i = 1}^m\mathbb{P}(\text{lfdr}(a_i, b_i) \leq \delta \mid H_{10})\\ 
    &\quad + \pi_{01}\frac{1}{m}\sum_{i = 1}^m\mathbb{P}(\text{lfdr}(a_i, b_i) \leq \delta \mid H_{01})\\
    &= \pi_{00}G_{00}(\delta) + \pi_{10}G_{10}(\delta)+ \pi_{01}G_{01}(\delta).
    \end{align*}
\end{proof}
Based on Proposition \ref{stepup-1}, and observing that
\[
\mathbb{E}\left[\frac{1}{m} \sum_{i=1}^m \mathbf{1}\{ \text{lfdr}(a_i, b_i) \leq \delta\}\right] = G(\delta),
\]
it is natural to estimate the marginal FDR,
\[
\widetilde{Q}(\delta) = \frac{\pi_{00}G_{00}(\delta) + \pi_{10}G_{10}(\delta)+ \pi_{01}G_{01}(\delta)}{G(\delta)},
\]
using the empirical quantity:
\[
Q_m(\delta) = \frac{\sum_{i = 1}^m \mathbf{1}\{\text{lfdr}(a_i, b_i) \leq \delta\}\,\text{lfdr}(a_i, b_i)}{\sum_{i = 1}^m \mathbf{1}\{\text{lfdr}(a_i, b_i) \leq \delta\}}.
\]

\subsection{EM Algorithm Details}
Let $\Gamma = (\pi_{00}, \pi_{10}, \pi_{01}, \pi_{11}, \mu, \theta, \kappa, \psi)$ denote the set of unknown parameters. We define the component densities corresponding to the four latent states as follows:
\begin{equation}\label{references2}
\begin{aligned}
    f_{00}(\sqrt{n}\hat{\alpha}_i, \sqrt{n}\hat{\beta}_i) &:= \phi(\sqrt{n}\hat{\alpha}_i, \sqrt{n}\hat{\beta}_i; 0, 0, \sigma_{i1}^2, \sigma_{i2}^2),\\
    f_{10}(\sqrt{n}\hat{\alpha}_i, \sqrt{n}\hat{\beta}_i) &:= \phi(\sqrt{n}\hat{\alpha}_i, \sqrt{n}\hat{\beta}_i; \mu, 0, \sigma_{i1}^2 + \kappa, \sigma_{i2}^2),\\
    f_{01}(\sqrt{n}\hat{\alpha}_i, \sqrt{n}\hat{\beta}_i) &:= \phi(\sqrt{n}\hat{\alpha}_i, \sqrt{n}\hat{\beta}_i; 0, \theta, \sigma_{i1}^2, \sigma_{i2}^2 + \psi),\\
    f_{11}(\sqrt{n}\hat{\alpha}_i, \sqrt{n}\hat{\beta}_i) &:= \phi(\sqrt{n}\hat{\alpha}_i, \sqrt{n}\hat{\beta}_i; \mu, \theta, \sigma_{i1}^2 + \kappa, \sigma_{i2}^2 + \psi),
\end{aligned}
\end{equation}
where $\phi(\cdot, \cdot; \mu_1, \mu_2, \sigma_1^2, \sigma_2^2)$ denotes the bivariate normal density with mean vector $(\mu_1, \mu_2)$, variances $(\sigma_1^2, \sigma_2^2)$, and zero correlation. The OLS variance estimates $(\sigma_{i1}^2, \sigma_{i2}^2)$ are treated as fixed inputs in the algorithm.

The complete data log-likelihood is given by:
\begin{equation}\label{llk}
    \begin{aligned}
    L(\Gamma) &= \sum_{i = 1}^m \sum_{u,v \in \{0,1\}} \mathbbm{1}(\boldsymbol{\xi}_i = (u,v)) \log f_{uv}(\sqrt{n}\hat{\alpha}_i, \sqrt{n}\hat{\beta}_i) \\
    &\quad + \sum_{i = 1}^m \sum_{u,v \in \{0,1\}} \mathbbm{1}(\boldsymbol{\xi}_i = (u,v)) \log \pi_{uv}.
    \end{aligned}
\end{equation}

\subsubsection*{E-Step}
In the Expectation step, we calculate the conditional probability (posterior) of the latent variable $\boldsymbol{\xi}_i$ given the observed data $(\hat{\alpha}_i, \hat{\beta}_i)$ and the current parameter estimates $\Gamma^{(t)}$:
\begin{align*}
    Q_{i;u,v}^{(t)} &:= \mathbb{P}(\boldsymbol{\xi}_i = (u,v) \mid \hat{\alpha}_i, \hat{\beta}_i, \Gamma^{(t)}) \\
    &= \frac{\pi^{(t)}_{uv} f_{uv}^{(t)}(\sqrt{n}\hat{\alpha}_i, \sqrt{n}\hat{\beta}_i)}{\sum_{j,k \in \{0,1\}} \pi^{(t)}_{jk} f_{jk}^{(t)}(\sqrt{n}\hat{\alpha}_i, \sqrt{n}\hat{\beta}_i)}, \quad u,v \in \{0,1\},
\end{align*}
where $f_{uv}^{(t)}$ denotes the density evaluated using the parameters from iteration $t$. The expected complete data log-likelihood is then:
\begin{align*}
    \mathcal{Q}(\Gamma \mid \Gamma^{(t)}) &= \sum_{i = 1}^m \sum_{u,v \in \{0,1\}} Q_{i;u,v}^{(t)} \log f_{uv}(\sqrt{n}\hat{\alpha}_i, \sqrt{n}\hat{\beta}_i) \\
    &\quad + \sum_{i = 1}^m \sum_{u,v \in \{0,1\}} Q_{i;u,v}^{(t)} \log \pi_{uv}.
\end{align*}

\subsubsection*{M-Step}
We maximize $\mathcal{Q}(\Gamma \mid \Gamma^{(t)})$ with respect to $\Gamma$ to obtain updated estimates.

\textbf{1. Updating Mixing Proportions ($\pi$):}
Maximizing with respect to $\pi_{uv}$ under the constraint $\sum_{u,v} \pi_{uv} = 1$ yields the standard closed-form update:
\begin{align}\label{pi_est}
     \pi_{uv}^{(t+1)} = \frac{1}{m}\sum_{i = 1}^m Q_{i;u,v}^{(t)}, \quad u,v \in \{0,1\}.
\end{align}

\textbf{2. Updating Means ($\mu, \theta$):}
Setting the partial derivative with respect to $\mu$ to zero:
\[
\frac{\partial \mathcal{Q}}{\partial \mu} = \sum_{i = 1}^m Q_{i;1,0}^{(t)}\frac{\sqrt{n}\hat{\alpha}_i - \mu}{\sigma_{i1}^2 + \kappa^{(t)}} + \sum_{i = 1}^m Q_{i;1,1}^{(t)}\frac{\sqrt{n}\hat{\alpha}_i - \mu}{\sigma_{i1}^2 + \kappa^{(t)}} = 0.
\]
Solving for $\mu$ yields the weighted least squares estimator:
\begin{align*}
    \mu^{(t+1)} = \sqrt{n}\frac{\sum_{i = 1}^m w_i^{(t)} \hat{\alpha}_i}{\sum_{i = 1}^m w_i^{(t)}}, \quad \text{where } w_i^{(t)} = \frac{Q_{i;1,0}^{(t)} + Q_{i;1,1}^{(t)}}{\sigma_{i1}^2 + \kappa^{(t)}}.
\end{align*}
Similarly, maximizing with respect to $\theta$ yields:
\begin{align*}
    \theta^{(t+1)} = \sqrt{n}\frac{\sum_{i = 1}^m v_i^{(t)}\hat{\beta}_i}{\sum_{i = 1}^m v_i^{(t)}}, \quad \text{where } v_i^{(t)} = \frac{Q_{i;0,1}^{(t)} + Q_{i;1,1}^{(t)}}{\sigma_{i2}^2 + \psi^{(t)}}.
\end{align*}

\textbf{3. Updating Variances ($\kappa, \psi$):}
Due to the heterogeneity of $\sigma_{i1}^2$ and $\sigma_{i2}^2$, closed-form updates for $\kappa$ and $\psi$ are not available. Therefore, the estimates $\kappa^{(t+1)}$ and $\psi^{(t+1)}$ are obtained by performing a 1-D numerical grid search on the likelihood function given in (\ref{llk}).

\subsection{Two-Step EM Algorithm}

\textbf{Step 1: Marginal Parameter Estimation}

In the first step, we fit a $(d_1 + 1)$-component univariate GMM to the statistics $\sqrt{n}\hat{\alpha}_i$:
\[
\sqrt{n}\hat{\alpha}_i \sim \sum_{u = 0}^{d_1}\pi_{u\cdot} N(\mu_u, \sigma^2_{i1} + \kappa_u).
\]
The posterior probability of component $u$ given the $i$-th observation at iteration $t$ is:
\[
Q^{(t)}_{iu} = \frac{\pi^{(t)}_{u\cdot} \phi(\sqrt{n}\hat{\alpha}_i; \mu^{(t)}_u, \sigma^2_{i1} + \kappa^{(t)}_u)}{\sum_{k = 0}^{d_1} \pi^{(t)}_{k\cdot} \phi(\sqrt{n}\hat{\alpha}_i; \mu^{(t)}_k, \sigma^2_{i1} + \kappa^{(t)}_k)}.
\]
The mean parameters $\mu_u$ are updated via weighted least squares:
\[
\mu^{(t+1)}_u = \sqrt{n}\frac{\sum_{i = 1}^m w^{(t)}_{iu} \hat{\alpha}_i}{\sum_{i = 1}^m w^{(t)}_{iu}}, \quad \text{where } w^{(t)}_{iu} = \frac{Q^{(t)}_{iu}}{\sigma^2_{i1} + \kappa^{(t)}_u}.
\]
The variance parameters $\kappa_u$ are estimated via a 1-D grid search maximizing the marginal log-likelihood. The parameters $\{\theta_v, \psi_v\}_{v=1}^{d_2}$ corresponding to $\sqrt{n}\hat{\beta}_i$ are estimated analogously.

\noindent
\textbf{Step 2: Joint Mixing Proportion Estimation}

We fix the parameters $\{\hat{\mu}_u, \hat{\kappa}_u\}_{u = 1}^{d_1}$ and $\{\hat{\theta}_v, \hat{\psi}_v\}_{v = 1}^{d_2}$ obtained from Step 1. Let the latent state be denoted by $\boldsymbol{\xi_i} = (\xi_{i1}, \xi_{i2})$, where $\xi_{i1} \in \{0, \dots, d_1\}$ and $\xi_{i2} \in \{0, \dots, d_2\}$. Our goal is to estimate the joint probability matrix $\pi_{uv} = \mathbb{P}(\xi_{i1} = u, \xi_{i2} = v)$.
Using the fixed marginal parameters, we compute the conditional probability of the latent state $\boldsymbol{\xi_i}$ given the data:
\[
P^{(t)}_{i;u,v} = \frac{\pi^{(t)}_{uv} \phi(\sqrt{n}\hat{\alpha}_i, \sqrt{n}\hat{\beta}_i \mid \xi_{i1} = u, \xi_{i2} = v)}{\sum_{j = 0}^{d_1}\sum_{k = 0}^{d_2} \pi^{(t)}_{jk} \phi(\sqrt{n}\hat{\alpha}_i, \sqrt{n}\hat{\beta}_i \mid \xi_{i1} = j, \xi_{i2} = k)}.
\]
The joint mixing proportions are updated as:
\[
\pi^{(t+1)}_{uv} = \frac{1}{m}\sum_{i = 1}^m P^{(t)}_{i;u,v}.
\]
We iterate this procedure until convergence. Finally, the aggregate probabilities for the composite hypotheses are computed by summing over the relevant indices:
\begin{align*}
    \mathbb{P}(H_{00}) &= \pi_{00}, \\
    \mathbb{P}(H_{10}) &= \sum_{u = 1}^{d_1}\pi_{u0},\\
    \mathbb{P}(H_{01}) &= \sum_{v = 1}^{d_2}\pi_{0v},\\
    \mathbb{P}(H_{11}) &= \sum_{u = 1}^{d_1}\sum_{v = 1}^{d_2}\pi_{uv}.
\end{align*}

\subsection{Comparison: Standard EM vs. Two-Step EM}

The two-step EM algorithm is designed primarily to mitigate the computational burden associated with increasing component counts ($d_1, d_2$). To evaluate the trade-offs between the standard bivariate EM (MLFDR) and the two-step variant (MLFDR-2), we conducted a simulation with $d_1 = d_2 = 1$. We considered sample sizes of $n = 100$ and $300$ for $m = 1000$ under the Dense alternative scenarion, i.e $\boldsymbol{\pi} = \{0.4, 0.2, 0.2, 0.2 \}$

The data were generated as follows:
\begin{align*}
    X &\sim \text{Ber}(0.1),\\
    M_i &= \alpha_iX + e_i,\\
    Y_i &= \beta_iM_i + \gamma_iX + \epsilon_i.
\end{align*}
Figure \ref{em-comp} compares the empirical FDR and power of HDMT, DACT, MLFDR (using standard bivariate EM), and MLFDR-2 (using two-step EM).

We also benchmarked the computation times for comparable dimensions, scaling up to $m = 50,000$ and $n = 500$ (Table \ref{computation_time}). The results demonstrate that while the standard EM is efficient for smaller datasets, the two-step EM scales significantly better for high-dimensional data. For instance, at $m=50,000$, MLFDR-2 is approximately five times faster than the standard MLFDR.

\begin{table}[H]
    \centering
    \begin{tabular}{lccc}
    \hline
    Method & \multicolumn{1}{c}{$m = 1000$} & \multicolumn{1}{c}{$m = 5000$} & \multicolumn{1}{c}{$m = 50,000$} \\
           & \multicolumn{1}{c}{$n = 100$}  & \multicolumn{1}{c}{$n = 200$}  & \multicolumn{1}{c}{$n = 500$} \\
    \hline
    MLFDR      & 247 ms  & 1006 ms & 6253 ms \\
    MLFDR-2    & 1209 ms & 1246 ms & 1166 ms \\
    \hline
    \end{tabular}
    \caption{Computation times (in milliseconds) for MLFDR vs. Two-Step MLFDR across varying dimensions.}
    \label{computation_time}
\end{table}

\begin{figure}
    \centering
    \includegraphics[width = 0.7\linewidth]{EM_comparison.png}
    \caption{Comparison of empirical FDR and power across methods based on 100 simulation runs. The results distinguish between the standard MLFDR (Maroon) and the Two-Step MLFDR-2 (yellow).}
    \label{em-comp}
\end{figure}
As shown in Figure \ref{em-comp}, both MLFDR and MLFDR-2 offer power improvements over HDMT and DACT. However, the two-step approximation results in a reduction in power compared to the exact bivariate EM. Given that this marginal power loss could translate to missed discoveries in large-scale genomic studies, we recommend using the standard bivariate GMM when the model complexity is low (e.g., $d_1 + d_2 \leq 3$). For more complex models where computational resources are a constraint, the two-step EM provides a viable and efficient alternative.

\subsection{Theory}
\begin{lemma}\label{lemma-var}
    Under Assumptions \textbf{(A2)} - \textbf{(A4)}, if $m,n \to \infty$ such that $m = o(\text{exp}(kn))$ for all $ 0<k < \infty$ , then
$$  \max_{1\leq i\leq m}|\sigma_{i1}^2-\sigma_{i1,0}^2|\rightarrow^p 0,$$
and 
$$ \max_{1\leq i\leq m}|\sigma_{i2}^2-\sigma_{i2,0}^2|\rightarrow^p 0.$$
\end{lemma}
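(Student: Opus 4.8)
The plan is to write each variance estimate as a smooth function of a few empirical moments, reduce the claim to the uniform concentration of those moments over $i$, and then propagate convergence through the smooth maps. Since $X$ is scalar, $\mathbf{X}'\mathbf{X} = \sum_{j=1}^n (X^j)^2$ and $\mathbf{P} = \mathbf{I} - \mathbf{X}(\mathbf{X}'\mathbf{X})^{-1}\mathbf{X}'$, so on writing $S_n = \frac{1}{n}\sum_{j=1}^n (X^j)^2$, $T_{n,i} = \frac{1}{n}\sum_{j=1}^n (M_i^j)^2$ and $U_{n,i} = \frac{1}{n}\sum_{j=1}^n X^j M_i^j$ one has
\begin{equation*}
\sigma_{i1}^2 = \frac{\sigma_{i,a}^2}{S_n}, \qquad \sigma_{i2}^2 = \frac{\sigma_{i,b}^2}{T_{n,i} - U_{n,i}^2/S_n}.
\end{equation*}
Replacing $(S_n, T_{n,i}, U_{n,i})$ by their population limits $(\mathbb{E}X^2, \mathbb{E}M_i^2, \mathbb{E}XM_i)$ reproduces exactly the limits $\sigma_{i1,0}^2$ and $\sigma_{i2,0}^2$ appearing in (\ref{GMM}). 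Hence it suffices to establish $|S_n - \mathbb{E}X^2| \rightarrow^p 0$, $\max_{1\le i\le m}|T_{n,i} - \mathbb{E}M_i^2| \rightarrow^p 0$ and $\max_{1\le i\le m}|U_{n,i} - \mathbb{E}XM_i| \rightarrow^p 0$. The first carries no maximum because $X$ is common to all $i$, so it is immediate; the real work lies in the two $i$-indexed maxima.

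Next I would handle these maxima by combining tail bounds with a union bound. Fix $t>0$; by (A2) the summands are i.i.d.\ across $j$, and (A3) gives $X^2 - \mathbb{E}X^2 \sim \text{subE}(16\rho^2)$ and $M_i^2 - \mathbb{E}M_i^2 \sim \text{subE}(16\nu_i^2)$. The sub-exponential Bernstein inequality then yields, for each $i$,
\begin{equation*}
\mathbb{P}\!\left(|T_{n,i} - \mathbb{E}M_i^2| > t\right) \le 2\exp\!\left(-\tfrac{n}{2}\min\!\Big(\tfrac{t}{16\nu_i^2}, \tfrac{t^2}{(16\nu_i^2)^2}\Big)\right),
\end{equation*}
while the cross term, sub-Gaussian with parameter $\eta_i$, obeys $\mathbb{P}(|U_{n,i} - \mathbb{E}XM_i| > t) \le 2\exp(-nt^2/(2\eta_i^2))$. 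Setting $\nu^2 = \sup_i \nu_i^2$ and $\eta^2 = \sup_i \eta_i^2$, finite under (A3)--(A4), and applying a union bound over $i=1,\dots,m$ gives $\mathbb{P}(\max_i|T_{n,i}-\mathbb{E}M_i^2|>t) \le 2m\exp(-c n t^2)$ and an analogous bound for $U_{n,i}$, with $c>0$ depending only on $(\nu,\eta,t)$ through the small-$t$ Bernstein regime. The stated growth condition ``$m = o(\exp(kn))$ for all $0<k<\infty$'' is equivalent to $\log m = o(n)$, so $2m\exp(-cnt^2) = 2\exp(\log m - cnt^2) \to 0$ for each fixed $t$, which delivers the three uniform convergences.

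Finally I would transfer these to the variance estimates. Using $e_i\perp(X,M_i)$ with $\mathbb{E}e_i=0$, the population denominator simplifies cleanly to $D_{i,0} := \mathbb{E}M_i^2 - (\mathbb{E}XM_i)^2/\mathbb{E}X^2 = \sigma_{i,a}^2 = \sigma_{i1,0}^2\,\mathbb{E}X^2$, which by (A4) is bounded away from $0$ and $\infty$ uniformly in $i$; likewise $\mathbb{E}X^2>0$ and $\sigma_{i,a}^2,\sigma_{i,b}^2$ are uniformly bounded. On the high-probability event where all three empirical moments lie within a small $\epsilon$ of their limits, $S_n$ and $D_i := T_{n,i} - U_{n,i}^2/S_n$ stay bounded away from zero uniformly in $i$, and the algebraic identities
\begin{equation*}
\sigma_{i1}^2 - \sigma_{i1,0}^2 = \sigma_{i,a}^2\frac{\mathbb{E}X^2 - S_n}{S_n\,\mathbb{E}X^2}, \qquad \sigma_{i2}^2 - \sigma_{i2,0}^2 = \sigma_{i,b}^2\frac{D_{i,0} - D_i}{D_i\,D_{i,0}}
\end{equation*}
bound both differences by uniformly bounded constants times the moment fluctuations, yielding the two claimed maxima convergences. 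I expect the main obstacle to be exactly this last step for $\sigma_{i2}^2$: one must control $U_{n,i}^2/S_n$ and keep the random denominator $D_i$ bounded away from zero \emph{simultaneously across all $m$ indices}, so that the Lipschitz transfer carries a constant uniform in $i$. This is where the interplay between the union bound and the rate $\log m = o(n)$ is essential, and where uniform control of the sub-Gaussian/sub-exponential parameters $(\nu_i,\eta_i)$ over $i$ enters.
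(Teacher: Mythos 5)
Your proposal is correct and follows essentially the same route as the paper's proof: Bernstein's inequality for the sub-exponential squared terms, a sub-Gaussian tail bound for the cross moments $n^{-1}\sum_j X^jM_i^j$, a union bound over $i$ made harmless by $\log m = o(n)$, and a continuity/Lipschitz transfer to $\sigma_{i1}^2$ and $\sigma_{i2}^2$ with the random denominators kept away from zero on a high-probability event. Your explicit algebraic identities for the differences, and your observation that the population denominator reduces to $\sigma_{i,a}^2$ (bounded via (A4)), merely make explicit the continuity argument the paper invokes; both proofs also share the same implicit requirement that the parameters $\nu_i,\eta_i$ be uniformly bounded over $i$.
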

\begin{proof}
Define $D_n := n^{-1}\sum_{j = 1}^n (X^j)^2$ and $D := \mathbb{E}[X^2]$. Since $D > 0$, there exists a real number $a$ such that $D > 2a > 0$. Under Assumption \textbf{(A3)}, $X$ is sub-Gaussian, which implies that $X^2$ is sub-exponential. Applying Bernstein's inequality to the centered term $D_n - D$, we have:
\begin{align*}
    \sum_{i = 1}^m \mathbb{P}\left(\left|\sigma_{i1}^2-\sigma_{i1,0}^2\right| > \epsilon\right)
    &= \sum_{i = 1}^m\mathbb{P}\left(\left|\frac{1}{D_n} - \frac{1}{D}\right|>\delta_i\right) \\
    &\leq \sum_{i = 1}^m\mathbb{P}\left(\left|\frac{1}{D_n} - \frac{1}{D}\right|>\delta_i, D_n > a\right) + \sum_{i = 1}^m\mathbb{P}(D_n \leq a)\\
    &\leq \sum_{i = 1}^m\mathbb{P}\left(\left|D_n - D\right| > 2a^2\delta_i\right)+ \sum_{i = 1}^m\mathbb{P}(|D_n - D| \geq a)\\
    &\leq 2\sum_{i = 1}^m \exp\left(-\frac{n}{2}\left(\frac{4a^4\delta_i^2}{256\rho^4}\wedge\frac{2a^2\delta_i}{16\rho^2}\right)\right)\\
    &\quad + 2m \exp\left(-\frac{n}{2}\left(\frac{a^2}{256\rho^4}\wedge\frac{a}{16\rho^2}\right)\right),
\end{align*}
where $\delta_i = \epsilon/\sigma^2_{i,a}$.

Next, we consider the convergence of the mediator moments. By Assumption \textbf{(A3)}, $M_i^2$ is sub-exponential. Using Bernstein's inequality and the union bound, we obtain:
\begin{align*}
    \mathbb{P}\left(\max_{1\leq i \leq m}\left|n^{-1}\sum^{n}_{j=1}(M_i^j)^2 - \mathbb{E}[M_i^2]\right|>\epsilon\right) 
    &\leq \sum_{i = 1}^m \mathbb{P}\left(\left|n^{-1}\sum^{n}_{j=1}(M_i^j)^2 - \mathbb{E}[M_i^2]\right|>\epsilon\right)\\
    &\leq 2\sum_{i = 1}^m \exp\left(-\frac{n}{2}\left(\frac{\epsilon^2}{256\nu_i^4}\wedge\frac{\epsilon}{16\nu_i^2}\right)\right)\\
    &\rightarrow 0.
\end{align*}
Similarly, since $M_iX$ is sub-Gaussian by assumption, we apply Chernoff's bound and the union bound:
\begin{align*}
     \mathbb{P}\left(\max_{1\leq i \leq m}\left|n^{-1}\sum^{n}_{j=1}M_i^jX^j- \mathbb{E}[M_i X]\right|>\epsilon\right)
     &\leq 2\sum_{i = 1}^m \exp\left(-\frac{n\epsilon^2}{2\eta_i^2}\right)\\
    &\rightarrow 0.
\end{align*}

To analyze $\sigma_{i2}^2$, we introduce the following notation:
\begin{align*}
    A_{n,i} &:= \frac{1}{n}\sum^{n}_{j=1}(M_i^j)^2, \quad &A_i &:= \mathbb{E}[M_i^2],\\
    B_{n,i} &:= \frac{1}{n}\sum^{n}_{j=1}M_i^jX^j, \quad &B_i &:= \mathbb{E}[M_iX],\\
    C_{n} &:= \frac{1}{n}\sum^{n}_{j=1}X^j, \quad &C &:= \mathbb{E}[X].
\end{align*}
For a fixed $i$, the estimator $\sigma_{i2}^2$ is a continuous function of $(A_{n,i}, B_{n,i}, C_n)$ at the point $(A_i, B_i, C)$. By the definition of continuity, for any $\epsilon > 0$, there exists a $\delta > 0$ such that:
\begin{align*}
  \mathbb{P}\left( \max_{1\leq i\leq m}|\sigma_{i2}^2-\sigma_{i2,0}^2| > \epsilon\right) 
  &< \sum_{i = 1}^m \mathbb{P}(|\sigma_{i2}^2-\sigma_{i2,0}^2| > \epsilon)\\
  &\leq \sum_{i = 1}^m \mathbb{P}\left((A_{n,i} - A_i)^2 + (B_{n,i} - B_i)^2 + (C_n - C)^2 > \delta^2\right)\\
  &\leq \sum_{i = 1}^m \Big\{\mathbb{P}(|A_{n,i} - A_i| > \delta/\sqrt{3}) + \mathbb{P}(|B_{n,i} - B_i| > \delta/\sqrt{3}) \\
  &\quad + \mathbb{P}(|C_n - C| > \delta/\sqrt{3})\Big\}\\
  &\leq 2\sum_{i = 1}^m \exp\left(-\frac{n}{2}\left(\frac{\delta^2/3}{256\nu_i^4}\wedge\frac{\delta/\sqrt{3}}{16\nu_i^2}\right)\right) \\
  &\quad + 2\sum_{i = 1}^m \exp\left(-\frac{n\delta^2}{6\eta_i^2}\right) + m\exp\left(-\frac{n\delta^2}{6\rho^2}\right)\\
  &\rightarrow 0.
\end{align*}
Therefore, we conclude that:
\begin{equation}\label{sigma2-conv}
     \max_{1\leq i\leq m}|\sigma_{i2}^2-\sigma_{i2,0}^2|\xrightarrow{p} 0.
\end{equation}
\end{proof}

\begin{lemma} \label{process-convergence}
Let $V_m, W_m, R_m$ be defined as in Equations (14), (15), and (16). As $m, n \to \infty$, the following convergence results hold:
\begin{align}
    \frac{1}{m}V_m(\delta) &\xrightarrow{P} V(\delta), \\
    \frac{1}{m}W_m(\delta) &\xrightarrow{P} V(\delta),\\
    \frac{1}{m} R_m(\delta) &\xrightarrow{P} G^0(\delta).
\end{align}
\end{lemma}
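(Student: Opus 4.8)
The plan is to treat each of the three statements as a weak law of large numbers for a normalized sum of bounded summands, after first replacing the random, index-dependent variances $(\sigma_{i1}^2,\sigma_{i2}^2)$ that appear inside $\text{lfdr}$ by their deterministic limits $(\sigma_{i1,0}^2,\sigma_{i2,0}^2)$. Write $L_i=\text{lfdr}(a_i,b_i)$ for the statistic built from the estimated variances and $L_i^0=\text{lfdr}(\alpha_{i,0},\beta_{i,0})$ for its idealized counterpart. For $R_m$ I would decompose
$$\frac{1}{m}R_m(\delta)-G^0(\delta)=\Big(\frac{1}{m}\sum_i I(L_i\le\delta)-\frac{1}{m}\sum_i P(L_i\le\delta)\Big)+\Big(\frac{1}{m}\sum_i P(L_i\le\delta)-G^0(\delta)\Big),$$
and handle the fluctuation term and the bias term separately, treating $V_m$ and $W_m$ by the same scheme.

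First I would establish conditional independence: conditioning on the shared design $\X$, the pairs $(\hat\alpha_i,\hat\beta_i)$ are independent across $i$, since they are driven by the independent noise vectors $(e_i,\epsilon_i)$ together with the independently drawn latent states and coefficients $(\boldsymbol{\xi}_i,\alpha_i,\beta_i)$. The summands defining $R_m$, $V_m$ and $W_m$ all lie in $[0,1]$ (indicators, and $\text{lfdr}\in[0,1]$), so conditional on $\X$ each fluctuation term has conditional variance $O(1/m)$; Chebyshev's inequality then gives $\frac{1}{m}\sum_i\big[I(L_i\le\delta)-E(I(L_i\le\delta)\mid\X)\big]\overset{P}{\to}0$, and likewise for the null-restricted sum and the $\text{lfdr}$-weighted sum.

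The bias term is where the real work sits. Here I would use that the density ratio defining $\text{lfdr}$ is a smooth function of $(\sigma_{i1}^2,\sigma_{i2}^2)$: the Gaussian densities $f_{jk}$ are jointly continuous, indeed locally Lipschitz, in the variance arguments on the compact range guaranteed by Assumption \textbf{(A4)}. Combined with the uniform variance convergence $\max_i|\sigma_{ij}^2-\sigma_{ij,0}^2|\rightarrow^p 0$ from Lemma \ref{lemma-var}, this forces both the statistic $L_i$ and the law of the data to converge to their idealized versions uniformly in $i$. Passing this convergence through the indicator requires that the limiting c.d.f.\ $G^0$ (and the component c.d.f.s $G^0_{jk}$) be continuous at $\delta$, i.e.\ that $\text{lfdr}$ has no atom at $\delta$; under this I obtain $\sup_i|P(L_i\le\delta)-P(L_i^0\le\delta)|\to 0$, hence $\frac{1}{m}\sum_i P(L_i\le\delta)\to G^0(\delta)$. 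The same two-step scheme applied to the summands restricted to $i\in H_{00}\cup H_{10}\cup H_{01}$ yields $\frac{1}{m}V_m(\delta)\rightarrow^p V(\delta)=\pi_{00}G^0_{00}(\delta)+\pi_{10}G^0_{10}(\delta)+\pi_{01}G^0_{01}(\delta)$.

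Finally, for $W_m$ I would exploit the identity underlying Proposition \ref{stepup-1}: integrating the weight $\text{lfdr}$ against the mixture density $f$ reweights each point exactly by its posterior null probability, so that $E\big[\tfrac1m W_m(\delta)\mid\X\big]$ and $E\big[\tfrac1m V_m(\delta)\mid\X\big]$ share the common value $\pi_{00}G_{00}(\delta)+\pi_{10}G_{10}(\delta)+\pi_{01}G_{01}(\delta)$ computed with the true variances. Combining this with the concentration and bias arguments above gives $\frac{1}{m}W_m(\delta)\rightarrow^p V(\delta)$, so that $W_m/m$ and $V_m/m$ share the same limit. The main obstacle throughout is controlling the bias uniformly in $i$: the variances are random and vary with $i$, and the event $\{\text{lfdr}\le\delta\}$ is a curved region whose geometry changes with these variances, so the continuity/no-atom argument at $\delta$ together with Lemma \ref{lemma-var} is the crux that legitimizes replacing $(\sigma_{ij}^2)$ by $(\sigma_{ij,0}^2)$.
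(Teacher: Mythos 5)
Your proposal is correct and follows essentially the same route as the paper: a fluctuation-plus-bias decomposition, conditional independence of the summands given the design, and the replacement of $(\sigma_{i1}^2,\sigma_{i2}^2)$ by $(\sigma_{i1,0}^2,\sigma_{i2,0}^2)$ via Lemma \ref{lemma-var} together with continuity of the conditional probability $K_t$ in the variance arguments. The only substantive differences are that the paper controls the fluctuation term with the Dvoretzky--Kiefer--Wolfowitz inequality (which supplies the uniformity in $\delta$ invoked later in the proof of Theorem \ref{main-thm1}) where you use Chebyshev, which suffices for the pointwise statement but not for the later Glivenko--Cantelli step, and that you make explicit, via the identity in Proposition \ref{stepup-1}, the $W_m$ argument that the paper dismisses as ``similarly.''
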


\begin{proof}
    Conditional on $X$ and $\{M_i\}_i$, the local FDR statistics $\text{lfdr}(\sqrt{n}\hat{\alpha}_i, \sqrt{n}\hat{\beta}_i)$ are independent across $i$. Applying the Dvoretzky–Kiefer–Wolfowitz inequality \citep{naaman2021tight}, we have:
\begin{align*}
\mathbb{P}&\left(\sup_{\delta}\left|\frac{V_m(\delta) - \sum_{i \in H_0} \mathbb{P}(\text{lfdr}(\sqrt{n}\hat{\alpha}_i, \sqrt{n}\hat{\beta}_i) \leq \delta \mid X,\{M_i\}_i, H_{0,i})}{m}\right| >\epsilon \;\Bigg|\; X,\{M_i\}_i\right)
\\ &\leq d(m+1) \exp(-2m\epsilon^2).
\end{align*}
Unconditioning yields the unconditional bound:
\begin{align*}
\mathbb{P}&\left(\sup_{\delta}\left|\frac{V_m(\delta) - \sum_{i \in H_0}\mathbb{P}(\text{lfdr}(\sqrt{n}\hat{\alpha}_i, \sqrt{n}\hat{\beta}_i) \leq \delta \mid X,\{M_i\}_i, H_{0,i})}{m}\right| >\epsilon \right)\\
&\leq d(m+1) \exp(-2m\epsilon^2).   
\end{align*}

Define the conditional cumulative distribution function:
\begin{equation*}
    K_t(\sigma_{i1}^2, \sigma_{i2}^2) = \mathbb{P}(\text{lfdr}(\sqrt{n}\hat{\alpha}_i, \sqrt{n}\hat{\beta}_i) \leq t \mid X,\{M_i\}_i, H_{00}).
\end{equation*}
Since the distribution of $(\sqrt{n}\hat{\alpha}_i, \sqrt{n}\hat{\beta}_i)$ given $X,\{M_i\}_i$ differs from the limiting distribution $(\alpha_{i,0}, \beta_{i,0})$ only via the variances—$(\sigma_{i1}^2, \sigma_{i2}^2)$ versus $(\sigma_{i1,0}^2, \sigma_{i2,0}^2)$—we can identify:
\begin{equation*}
    K_t(\sigma_{i1,0}^2, \sigma_{i2,0}^2) = \mathbb{P}(\text{lfdr}(\alpha_{i,0}, \beta_{i,0}) \leq t \mid H^0_{00}).
\end{equation*}
The function $K_t(a,b)$ is continuous with respect to $(a,b)$. Therefore, applying the continuous mapping theorem and the convergence results from Lemma 1, we obtain:
\begin{align*}
    \mathbb{P}&(|G_{00}(t) - G^0_{00}(t)| > \epsilon) \\
    &\leq \mathbb{P}\left(\sum_{i =1}^m |K_t(\sigma_{i1}^2, \sigma_{i2}^2) - K_t(\sigma_{i1,0}^2, \sigma_{i2,0}^2)| > \epsilon m\right)\\
    &\leq \sum_{i =1}^m \mathbb{P}(|K_t(\sigma_{i1}^2, \sigma_{i2}^2) - K_t(\sigma_{i1,0}^2, \sigma_{i2,0}^2)| > \epsilon)\\
    &\leq \sum_{i =1}^m \mathbb{P}\left(|\sigma_{i1}^2-\sigma_{i1,0}^2|^2 + |\sigma_{i2}^2-\sigma_{i2,0}^2|^2 > \delta^2 \right)\\
    &\leq \sum_{i =1}^m \left[ \mathbb{P}\left(|\sigma_{i1}^2-\sigma_{i1,0}^2| > \frac{\delta}{\sqrt{2}}\right) + \mathbb{P}\left(|\sigma_{i2}^2-\sigma_{i2,0}^2| > \frac{\delta}{\sqrt{2}}\right)\right]\\
    &\to 0.
\end{align*}
By analogous arguments, we establish:
\begin{align*}
   G_{10}(t) &\xrightarrow{P} G^0_{10}(t),\\
   G_{01}(t) &\xrightarrow{P} G^0_{01}(t),\\
   G_{11}(t) &\xrightarrow{P} G^0_{11}(t).
\end{align*}
Combining these convergence results, it follows that:
\begin{align}
 \frac{1}{m} \sum_{i \in H_0}\mathbb{P}(\text{lfdr}(\sqrt{n}\hat{\alpha}_i, \sqrt{n}\hat{\beta}_i) \leq \delta \mid X,\{M_i\}_i, H_{0,i}) \xrightarrow{P} V(\delta).    
\end{align}
The convergence of $W_m(\delta)$ and $R_m(\delta)$ follows a similar derivation.
\end{proof}

\begin{proof}[Proof of Theorem 1]
    Invoking the Glivenko-Cantelli theorem, we establish the following uniform convergence results:
\begin{equation}\label{gliv-cantelli}
    \begin{aligned}
        \sup_{\delta \in [0,1]}\left|\frac{1}{m} R_m(\delta) - G^0(\delta)\right| &\xrightarrow{P} 0,\\
        \sup_{\delta \in [0,1]}\left|\frac{1}{m} V_m(\delta) - V(\delta)\right| &\xrightarrow{P} 0,\\
         \sup_{\delta \in [0,1]}\left|\frac{1}{m} W_m(\delta) - V(\delta)\right| &\xrightarrow{P} 0.
   \end{aligned}
\end{equation}
Define the realized False Discovery Proportion (FDP) process as:
\begin{align*}
   \widetilde{Q}_m(\delta) =  \frac{V_m(\delta)}{R_m(\delta)}.
\end{align*}
Following arguments analogous to Lemma 8.2 in \cite{Cao2021}, we have:
\begin{equation}\label{eq-1}
    \begin{aligned}
        \sup_{\delta \geq \delta_\infty}\left|Q_m(\delta) - Q(\delta) \right| &\xrightarrow{P} 0,\\
        \sup_{\delta \geq \delta_\infty}\left|\widetilde{Q}_m(\delta) - Q(\delta) \right| &\xrightarrow{P} 0.
    \end{aligned}
\end{equation}
The remainder of the proof follows the logic of Proposition 2.2 in \cite{Cao2021}. Define $e = \alpha - Q(\delta_\infty)$. By Assumption \textbf{(A5)}, we have $e > 0$. Combining this with (\ref{eq-1}), implies that $\mathbb{P}(|Q_m(\delta_\infty) - Q(\delta_\infty)| < e/2) \to 1$. Consequently, $\mathbb{P}(Q_m(\delta_\infty) < \alpha) \to 1$. By the definition of $\delta_m$, this implies that $\mathbb{P}(\delta_m \geq \delta_{\infty}) \to 1$.

We then bound the difference between the estimated and realized FDP at the threshold $\delta_m$:
\begin{align*}
    Q_m(\delta_m) - \widetilde{Q}_m(\delta_m) 
    &\geq \inf_{\delta \geq \delta_\infty}\{ Q_m(\delta) - \widetilde{Q}_m(\delta)\}\\
    &\geq \inf_{\delta \geq \delta_\infty}\{ Q_m(\delta) - Q(\delta) + Q(\delta) - \widetilde{Q}_m(\delta)\} \\
    &= o_\mathbb{P}(1).
\end{align*}
Therefore,
\begin{equation*}
    \widetilde{Q}_m(\delta_m) \leq Q_m(\delta_m) + o_\mathbb{P}(1) \leq \alpha + o_\mathbb{P}(1).
\end{equation*}
This implies:
\begin{equation*}
    \frac{V_m(\delta_m)}{R_m(\delta_m)\lor 1} \leq \frac{V_m(\delta_m)}{R_m(\delta_m)} =  \widetilde{Q}_m(\delta_m) \leq \alpha + o_\mathbb{P}(1).
\end{equation*}
Finally, applying Lemma 8.3 of \cite{Cao2021} ensures uniform integrability, yielding:
\begin{equation*}
    \limsup_{m \to \infty}\text{FDR}_m(\delta_m) = \limsup_{m \to \infty} \mathbb{E}\left[\frac{V_m(\delta_m)}{R_m(\delta_m)\lor 1}\right] \leq \alpha.
\end{equation*}
\end{proof}

\begin{proof}[Proof of Theorem 2]
    We define the empirical processes based on the estimated local FDR as:
\begin{align}
    \widehat{R}_m(\delta) &:= \sum_{i = 1}^m \mathbf{1}\{\widehat{\text{lfdr}}(\sqrt{n}\hat{\alpha}_i, \sqrt{n}\hat{\beta}_i) \leq \delta\},\\
    \widehat{W}_m(\delta) &:= \sum_{i = 1}^m \mathbf{1}\{\widehat{\text{lfdr}}(\sqrt{n}\hat{\alpha}_i, \sqrt{n}\hat{\beta}_i) \leq \delta\}\widehat{\text{lfdr}}(\sqrt{n}\hat{\alpha}_i, \sqrt{n}\hat{\beta}_i).
\end{align}
Following the argument in Lemma 8.4 of \cite{Cao2021} and invoking Lemma \ref{lfdrconv}, we establish the uniform convergence:
\begin{align}
    \sup_{\delta \geq \delta_\infty}\left|\frac{1}{m}\widehat{R}_m(\delta) - G^0(\delta) \right| &\xrightarrow{P} 0,\\
    \sup_{\delta \geq \delta_\infty}\left|\frac{1}{m}\widehat{W}_m(\delta) - V(\delta) \right| &\xrightarrow{P} 0.
\end{align}
Using arguments similar to Lemma 8.2 of \cite{Cao2021}, it follows that:
\[
\sup_{\delta \geq \delta_\infty}\left|\widehat{Q}_m(\delta) - Q(\delta)\right| \xrightarrow{P} 0.
\]
Proceeding analogously to Theorem 1, set $e = \alpha - Q(\delta_\infty) > 0$. We have:
\[
\left|\widehat{Q}_m(\delta_\infty) - Q(\delta_\infty) \right| \leq \sup_{\delta \geq \delta_\infty}\left|\widehat{Q}_m(\delta) - Q(\delta) \right| \leq e/2 \quad \text{with probability } \to 1.
\]
This implies $\mathbb{P}(\widehat{Q}_m(\delta_\infty) < \alpha) \to 1$, and consequently $\mathbb{P}(\widehat{\delta}_m \geq \delta_\infty ) \to 1$. Conditional on the event $\{\widehat{\delta}_m \geq \delta_\infty \}$, we have:
\begin{align*}
   \widetilde{Q}_m(\widehat{\delta}_m) - \widehat{Q}_m(\widehat{\delta}_m)
   &\leq \sup_{\delta \geq \delta_\infty}\left|\widehat{Q}_m(\delta) -  \widetilde{Q}_m(\delta)\right|\\
   &\leq \sup_{\delta \geq \delta_\infty}\left|\widehat{Q}_m(\delta) - Q(\delta) \right| + \sup_{\delta \geq \delta_\infty}\left|Q(\delta)- \widetilde{Q}_m(\delta)\right|\\
   &= o_\mathbb{P}(1).
\end{align*}
By the definition of $\widehat{\delta}_m$, we have $\widehat{Q}_m(\widehat{\delta}_m) \leq \alpha$. Therefore:
\begin{equation*}
    \frac{V_m(\widehat{\delta}_m)}{R_m(\widehat{\delta}_m)\lor 1} \leq \frac{V_m(\widehat{\delta}_m)}{R_m(\widehat{\delta}_m)} =  \widetilde{Q}_m(\widehat{\delta}_m) \leq \alpha + o_\mathbb{P}(1).
\end{equation*}
Applying Lemma 8.3 of \cite{Cao2021}, we conclude:
\begin{equation*}
    \limsup_{m \to \infty}\text{FDR}_m(\widehat{\delta}_m) = \limsup_{m \to \infty} \mathbb{E}\left[\frac{V_m(\widehat{\delta}_m)}{R_m(\widehat{\delta}_m)\lor 1}\right] \leq \alpha.
\end{equation*}
\end{proof}

\begin{defn}[Gaussian Mixture Models]
    Let $\mathcal{F}^2_{\text{Gauss}}$ denote the class of bivariate Gaussian Mixture Model (GMM) densities, defined as:
    $$\mathcal{F}^2_{\text{Gauss}} = \left\{f: f(\alpha, \beta) = \int \phi(\alpha, \beta; \mu, \theta, \sigma_1^2, \sigma_2^2, \rho)\,dG^*(\mu, \theta, \sigma_1^2, \sigma_2^2, \rho) \right\},$$
    where $G^*$ is a discrete probability measure supported on the set:
    \begin{align*}
        &\mu \in [-M_1, M_1],\\
        &\theta \in [-M_2, M_2],\\
        &\sigma_1^2, \sigma_2^2 \in (c, C),\\
        &\rho \in (0,1).
    \end{align*}
    We extend this definition to the class of $b$-variate GMM densities, denoted by $\mathcal{F}^b_{\text{Gauss}}$. Let $G^d_b$ represent the corresponding mixing distribution with at most $d$ atoms, defined on a bounded support. Specifically, if $\bfmu_i$ and $\bfS_i = (\sigma^i_{kl})_{k,l=1}^b$ denote the mean vector and the covariance matrix of the $i$-th component for $i = 1, \dots, d$, we assume:
    \begin{align*}
        \bfmu_i &\in [-M\boldsymbol{1}, M\boldsymbol{1}],\\
        \sigma^i_{kk} &\in [c, C].
    \end{align*}
    Note that this condition implies $\sigma^i_{kl} \in [-C, C]$ for $k \neq l$. Based on these parameters, we define $G^d_b$ as:
    $$G^d_b = \sum_{i = 1}^d \pi_i \delta(\bfmu_i, \bfS_i).$$
    If $\phi(\bfx, \bfmu, \bfS)$ denotes the density of a $b$-dimensional multivariate normal distribution with mean $\bfmu$ and covariance matrix $\bfS$, then:
    $$\mathcal{F}^b_{\text{Gauss}} = \left\{f: f(\bfx) = \sum_{i = 1}^d \pi_i\phi(\bfx, \bfmu_i, \bfS_i) = \int \phi(\bfx, \bfmu, \bfS)\,dG_b^d(\bfmu, \bfS) \right\}.$$
\end{defn}

\begin{defn}[AMLE]
    Let $\bfX$ be a $d$-dimensional random variable with density function $f: \mathbb{R}^d \to \mathbb{R}$ (or probability mass function for count data), where the parametric form of $f$ may be unknown. Based on a sample $(\bfX_1, \bfX_2, \dots, \bfX_m)$, an estimator $\hat{f}$ is defined as an Approximate Maximum Likelihood Estimator (AMLE) of $f$ if:
    $$\prod_{i = 1}^m \frac{\hat{f}(\bfX_i)}{f(\bfX_i)} \geq 1.$$
    That is, to qualify as an AMLE, the estimator must yield a likelihood evaluated at the observed samples that is at least as high as that of the true density.
\end{defn}

\begin{defn}[Hellinger Distance]
    We quantify the discrepancy between the estimated density $\hat{f}$ and the true density $f$ using the average squared Hellinger distance, given by:
    $$\mathcal{D}(\hat{f}, f) = \frac{1}{m}\sum_{i = 1}^m h^2(\hat{f}(\alpha_i, \beta_i),f(\alpha_i, \beta_i)),$$
    where
    $$h^2(\hat{f}(\alpha_i, \beta_i),f(\alpha_i, \beta_i)) = \frac{1}{2}\int \Big(\sqrt{\hat{f}(\alpha_i, \beta_i)} - \sqrt{f(\alpha_i, \beta_i)}\Big)^2\,d\alpha_i d\beta_i.$$ 
\end{defn}

\begin{defn}[Total Variation Distance]
    Alternatively, the loss between the estimated and true distributions can be quantified using the average Total Variation (TV) distance:
    $$TV(\hat{f}, f) = \frac{1}{m}\sum_{i = 1}^m \Delta(\hat{f}(\alpha_i, \beta_i),f(\alpha_i, \beta_i)),$$
    where
    $$\Delta(\hat{f}(\alpha_i, \beta_i),f(\alpha_i, \beta_i)) = \frac{1}{2}\int \left|\hat{f}(\alpha_i, \beta_i)-f(\alpha_i, \beta_i)\right|\,d\alpha_i d\beta_i.$$
    Note that the Hellinger and Total Variation distances satisfy the inequality:
    $$\mathcal{D}(\hat{f}, f) \leq TV(\hat{f}, f) \leq \sqrt{2}\sqrt{\mathcal{D}(\hat{f}, f)}.$$
\end{defn}

\begin{defn}[Restricted Supremum Norm]
    Let $h, \tilde{h}: \mathbb{R}^d \to \mathbb{R}$ be two functions. We define the supremum norm over a bounded subset $S \subset \mathbb{R}^d$ as:
    \begin{equation}\label{sup-norm}
        \norm{h - \tilde{h}}_{\infty, S} := \sup_{\bfx \in S}\bigg|h(\bfx) - \tilde{h}(\bfx)\bigg|.
    \end{equation}
\end{defn}

\begin{lemma}\label{coveringnum}
    Consider the family of $b$-variate Gaussian mixture distributions $\mathcal{F}^b_{\text{Gauss}}$ parameterized by the discrete mixing distributions $G^d_b$. Let $N(\eta, \mathcal{F}^b_{\text{Gauss}}, \|\cdot\|_{\infty, S})$ denote the covering number of this class under the restricted supremum norm defined in Equation (\ref{sup-norm}). Assuming that the covariance matrices of the component distributions are full-rank, the following bound holds:
    $$N(\eta, \mathcal{F}^b_{\text{Gauss}}, \|\cdot\|_{\infty, S}) \leq K \eta^{-\frac{2d + 3db + db^2}{2}},$$
    where $K$ is a constant depending on the parameters $(d, b, M, c, C)$.
\end{lemma}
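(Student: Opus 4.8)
The plan is to reduce the covering-number bound for the infinite-dimensional class $\mathcal{F}^b_{\text{Gauss}}$ to a covering-number bound for the finite-dimensional parameter set that indexes it. A density $f \in \mathcal{F}^b_{\text{Gauss}}$ is completely determined by the parameter vector
$$\theta = (\pi_1, \ldots, \pi_d, \bfmu_1, \ldots, \bfmu_d, \bfS_1, \ldots, \bfS_d),$$
which ranges over a bounded set $\Theta \subset \mathbb{R}^P$: the weights $\pi_i$ lie in the simplex, the means satisfy $\bfmu_i \in [-M\boldsymbol{1}, M\boldsymbol{1}]$, and the symmetric full-rank covariance matrices have diagonal entries in $[c, C]$, so by the full-rank assumption all of their eigenvalues are confined to a compact subinterval of $(0, \infty)$. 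Counting $d$ weight coordinates, $db$ mean coordinates, and the $\approx db^2/2$ free entries of the $d$ symmetric covariance matrices gives $P = (2d + 2db + db^2)/2$, which matches the target exponent. The strategy is then to show that the parameter-to-density map $\theta \mapsto f_\theta$ is Lipschitz with respect to $\norm{\cdot}_{\infty, S}$, so that a Euclidean $\epsilon$-net of $\Theta$ induces an $\eta$-net of $\mathcal{F}^b_{\text{Gauss}}$.

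The first step is to establish uniform boundedness and Lipschitz continuity of a single Gaussian component $\phi(\bfx; \bfmu, \bfS)$, viewed as a function of $(\bfmu, \bfS)$, uniformly over $\bfx \in S$. Because the eigenvalues of $\bfS$ are bounded away from $0$ and $\infty$, both $\bfS^{-1}$ (in operator norm) and $(\det \bfS)^{-1/2}$ are bounded on $\Theta$, so $\phi$ is uniformly bounded on $S \times \Theta$, and its gradients in $\bfmu$ (which involve $\bfS^{-1}(\bfx - \bfmu)\phi$) and in the entries of $\bfS$ (which involve quadratic forms in $\bfS^{-1}$ times $\phi$) are likewise uniformly bounded since $\bfx$ and $\bfmu$ range over compacta. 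This yields a constant $L = L(d, b, M, c, C)$ with $\sup_{\bfx \in S} |\phi(\bfx; \bfmu, \bfS) - \phi(\bfx; \bfmu', \bfS')| \le L(\norm{\bfmu - \bfmu'} + \norm{\bfS - \bfS'})$.

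The second step is to propagate this to the mixture via the triangle inequality. Writing $f_\theta(\bfx) = \sum_{i=1}^d \pi_i \phi(\bfx; \bfmu_i, \bfS_i)$ and decomposing $f_\theta - f_{\theta'}$ into a weight-perturbation term and a component-perturbation term, the uniform bound on $\phi$ controls the former and the Lipschitz bound controls the latter, giving $\norm{f_\theta - f_{\theta'}}_{\infty, S} \le C'' \norm{\theta - \theta'}$ for a suitable norm on $\Theta$. Consequently, an $\epsilon$-cover of $\Theta$ with $\epsilon = \eta / C''$ maps onto an $\eta$-cover of $\mathcal{F}^b_{\text{Gauss}}$. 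Since $\Theta$ is a bounded subset of $\mathbb{R}^P$, its $\epsilon$-covering number is at most $(C_0/\epsilon)^P$ by the standard volumetric bound, and substituting $\epsilon = \eta/C''$ produces $N(\eta, \mathcal{F}^b_{\text{Gauss}}, \norm{\cdot}_{\infty, S}) \le K \eta^{-P}$ with $K = K(d, b, M, c, C)$, as claimed.

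The main obstacle I anticipate is the Lipschitz estimate in the covariance entries: unlike the mean, the covariance enters $\phi$ nonlinearly through $\bfS^{-1}$ and $\det \bfS$, so controlling the relevant derivatives genuinely requires the full-rank assumption to keep the smallest eigenvalue bounded below, since otherwise $\bfS^{-1}$ blows up and no uniform Lipschitz constant exists. A secondary, bookkeeping-level point is that permuting the $d$ components leaves $f_\theta$ unchanged, so the map $\theta \mapsto f_\theta$ is many-to-one; this only shrinks the image and therefore does not affect the validity of the upper bound, but it should be flagged so that the covering argument is stated on the parameter space and transferred correctly.
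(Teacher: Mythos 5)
Your approach---establishing that the parameter-to-density map $\theta\mapsto f_\theta$ is Lipschitz uniformly over $\bfx\in S$ (using the lower bound on the eigenvalues of $\bfS$ to control $\bfS^{-1}$ and $\det\bfS$) and then transferring a Euclidean net of the bounded parameter set to an $\eta$-net of $\mathcal{F}^b_{\text{Gauss}}$---is essentially the paper's proof, which implements the same idea by discretizing the means and covariance entries on an $\eta$-lattice, bounding each component's perturbation via the mean value theorem with the same spectral estimates, and covering the weight simplex separately by a volumetric bound. The one discrepancy is bookkeeping: a symmetric $b\times b$ covariance matrix has $b(b+1)/2$ free entries rather than $b^2/2$, so the parameter dimension is $(2d+3db+db^2)/2$; this is exactly the exponent the paper's own proof arrives at in its final display (the smaller exponent in the lemma statement appears to be a typo), and the difference is immaterial downstream since the lemma is only invoked with $\eta=m^{-2}$ to conclude $\log N\lesssim\log m$.
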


\begin{proof}
    Let $f_{G^d_b} \in \mathcal{F}^b_{\text{Gauss}}$ be a density function with mixing distribution $G^d_b = \sum_{i = 1}^d \pi_i \delta(\boldsymbol{\mu}_i, \mathbf{S}_i)$, given by:
    $$f_{G^d_b}(\mathbf{x}) = \sum_{i = 1}^d \pi_i\phi(\mathbf{x}, \boldsymbol{\mu}_i, \mathbf{S}_i).$$
    We assume the supports of the mean vectors $\boldsymbol{\mu}_i$ and covariance matrices $\mathbf{S}_i$ are bounded. We approximate $f_{G^d_b}$ by a discretized density $f_{G^d_{b, \eta}}$ supported on a lattice with at most $d$ atoms. The approximating mixing distribution is:
    $$G^d_{b, \eta} = \sum_{i = 1}^d \pi_i \delta(\boldsymbol{\mu}^\eta_i, \mathbf{S}^\eta_i),$$
    with corresponding density $f_{G^d_{b, \eta}}(\mathbf{x}) = \sum_{i = 1}^d \pi_i\phi(\mathbf{x}, \boldsymbol{\mu}^\eta_i, \mathbf{S}^\eta_i).$

    \textbf{Step 1: Discretization of Parameters}
    The discretized parameters $\boldsymbol{\mu}_i^\eta$ and $\mathbf{S}_i^\eta = (\sigma_{i;kl}^\eta)_{k,l = 1}^b$ are defined as:
    \begin{align}
        \boldsymbol{\mu}_i^\eta &= \eta \operatorname{sgn}(\boldsymbol{\mu}_i) \odot \bigg\lfloor\frac{|\boldsymbol{\mu}_i|}{\eta} \bigg\rfloor, \quad i = 1 \dots d,\\
        \sigma_{i;kl}^\eta &= \eta \operatorname{sgn}(\sigma_{i;kl})\bigg\lfloor\frac{| \sigma_{i;kl}|}{\eta} \bigg\rfloor, \quad k \neq l, \quad i = 1 \dots d, \\
        \sigma_{i;kk}^\eta &= \eta \bigg\lfloor\frac{\sigma_{i;kk}}{\eta} \bigg\rfloor, \quad k = 1 \dots b.
    \end{align}
    Define the parameter vector for the $i$-th component as $\mathcal{B}_i = (\boldsymbol{\mu}_i, \text{vec}(\mathbf{S}_i))$. Then, for the Euclidean norm, we have:
    $$\norm{\mathcal{B}_i - \mathcal{B}_{i;\eta}} \leq \frac{3b + b^2}{2}\eta \quad \forall i = 1\dots d.$$

    \textbf{Step 2: Bounding the Density Approximation Error}
    We bound the supremum norm between the component densities using the Mean Value Theorem. Dropping the index $i$ for brevity:
    \begin{align*}
        \norm{\phi(\mathbf{x}, \boldsymbol{\mu}, \mathbf{S}) - \phi(\mathbf{x}, \boldsymbol{\mu}^\eta,\mathbf{S}^\eta )}_\infty \leq C_1 \norm{\mathcal{B} - \mathcal{B}_\eta} \leq C_1 \frac{3b + b^2}{2}\eta,
    \end{align*}
    where $C_1 = \sup_{\mathbf{x}}\sup_{\mathcal{B}} \norm{\nabla_\mathcal{B}\phi(\mathbf{x}, \boldsymbol{\mu}, \mathbf{S})}$. The gradients are given by:
    \begin{align*}
        \frac{\partial \phi}{\partial \boldsymbol{\mu}} &= \phi(\mathbf{x}, \boldsymbol{\mu}, \mathbf{S})\mathbf{S}^{-1}(\mathbf{x} - \boldsymbol{\mu}),\\
        \frac{\partial \phi}{\partial \mathbf{S}} &= T_1 + T_2,
    \end{align*}
    where $T_1 = -\frac{1}{2}\mathbf{S}^{-1}\phi$ and $T_2 =\frac{1}{2}\mathbf{S}^{-1}(\mathbf{x} - \boldsymbol{\mu})(\mathbf{x} - \boldsymbol{\mu})'\mathbf{S}^{-1}\phi$.

    We bound these terms using spectral theory. Assuming $\mathbf{S}$ is full rank, let $\lambda_b$ and $\lambda_b^{(1/2)}$ be the smallest eigenvalues of $\mathbf{S}$ and $\mathbf{S}^{1/2}$ respectively. The Frobenius norm satisfies $\norm{\mathbf{S}^{-1}}_F \leq \sqrt{b}/\lambda_b$.
    Noting that $\norm{\mathbf{S}^{-1/2}(\mathbf{x} - \boldsymbol{\mu})\phi} \leq (2\pi)^{-b/2}\sqrt{e}$, we obtain:
    \begin{align*}
        \norm{\frac{\partial \phi}{\partial \boldsymbol{\mu}}} \leq \norm{\mathbf{S}^{-1/2}}_F \norm{\mathbf{S}^{-1/2}(\mathbf{x} - \boldsymbol{\mu})\phi} \leq \frac{\sqrt{be}}{(2\pi)^{b/2}\lambda^{(1/2)}_b} := K_1.
    \end{align*}
    For the covariance gradient:
    \begin{align*}
        \norm{\frac{\partial \phi}{\partial \mathbf{S}}}_F 
        &\leq \norm{T_1}_F + \norm{T_2}_F \\
        &\leq \frac{\sqrt{b}}{2(2\pi)^{b/2}\lambda_b} + \frac{b}{(\lambda^{(1/2)}_b)^2}\norm{\frac{1}{(2\pi)^{b/2}}\mathbf{z} \mathbf{z}'e^{-\mathbf{z}'\mathbf{z}/2}}_{F}\\  
        &\leq \frac{\sqrt{b}}{2(2\pi)^{b/2}\lambda_b} + \frac{be^{-0.25}}{2(\lambda^{(1/2)}_b)^2(2\pi)^{b/2}} := K_2,
    \end{align*}
    where $\mathbf{z} = \mathbf{S}^{-1/2}(\mathbf{x} - \boldsymbol{\mu})$. Summing over all $d$ components:
    \begin{equation}\label{eqn-part1}
        \norm{f_{G^d_b} - f_{G^d_{b, \eta}}}_\infty \leq d(K_1 + K_2)\frac{(3b + b^2)}{2}\eta := K_3 \eta.
    \end{equation}

    \textbf{Step 3: Covering Number Calculation}
    The parameters are supported on grids $\Omega_1$ (means), $\Omega_2$ (off-diagonal covariances), and $\Omega_3$ (diagonal covariances) with cardinalities:
    \begin{align*}
        |\Omega_1| \approx \frac{2M}{\eta}, \quad |\Omega_2| \approx \frac{2C}{\eta}, \quad |\Omega_3| \approx \frac{C^2}{\eta}.
    \end{align*}
    Let $\mathcal{P}^{d,\eta}$ be an $\eta$-net for the mixing weights simplex $\mathcal{P}^d$. By the volume comparison lemma, $|\mathcal{P}^{d,\eta}| \leq (1 + 2/\eta)^d$.
    Choosing weights $\mathbf{w}^{d,\eta} \in \mathcal{P}^{d,\eta}$ such that $\sum |G_{b,\eta}^d - w_i^{d,\eta}| \leq \eta$, the error contribution from weights is bounded by $(2\pi)^{-b/2}\eta$.
    
    Combining the errors, the total approximation error is bounded by $K_4\eta$. The total covering number is the product of the number of choices for weights, means, diagonal variances, and off-diagonal covariances:
    \begin{equation}
       N(\eta,\mathcal{F}^b_{\text{Gauss}}, \norm{.}_{\infty} ) \leq |\mathcal{P}^{d,\eta}| \cdot |\Omega_1|^{db} \cdot |\Omega_3|^{db} \cdot |\Omega_2|^{\frac{db(b-1)}{2}}.
    \end{equation}
    Substituting the grid sizes and simplifying for $\eta \to 0$:
    \begin{align*}
        N(\eta,\mathcal{F}^b_{\text{Gauss}}, \norm{.}_{\infty} ) 
        &\leq K' \left(\frac{1}{\eta}\right)^d \left(\frac{1}{\eta}\right)^{db} \left(\frac{1}{\eta}\right)^{db} \left(\frac{1}{\eta}\right)^{\frac{db(b-1)}{2}} \\
        &= K \eta^{-\left(d + 2db + \frac{db^2 - db}{2}\right)} \\
        &= K \eta^{-\frac{2d + 3db + db^2}{2}}.
    \end{align*}
\end{proof}

\begin{lemma}\label{hellconverge}
    Suppose $(\alpha_1, \beta_1),\dots,(\alpha_m, \beta_m)$ are samples drawn from $\mathcal{F}^2_{\text{Gauss}}$. Given an AMLE $\hat{f}$ of $f$, we have for $t \geq 1$ and $\delta > 0$:
\begin{align*}
\mathbb{P}\Big\{\mathcal{D}(\hat{f}, f) \geq t\delta\Big\} 
\leq C^*\log m \exp \bigg(20.31t^2 (M^*)^2 - \frac{mt^2 \delta^2}{2} \bigg) + 2m^{-\frac{t^2}{C}}.   
\end{align*}
\end{lemma}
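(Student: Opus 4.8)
\textbf{Proof proposal for Lemma~\ref{hellconverge}.}
The plan is to follow the now-standard route for bounding the Hellinger risk of an (approximate) maximum-likelihood density estimator: combine the AMLE ``basic inequality'' with an empirical-process bound driven by the supremum-norm covering number of Lemma~\ref{coveringnum}, and convert the resulting one-sided deviation bound into a tail bound through a peeling device. Write $\mathbb{P}_m$ for the empirical measure of the samples, $\mathbb{P}$ for their true law, and set $\bar f = (\hat f + f)/2$. First I would derive the basic inequality. Since $\hat f$ is an AMLE, $\frac1m\sum_i\log(\hat f/f)(\alpha_i,\beta_i)\ge 0$, and concavity of the logarithm gives $\log(\bar f/f)\ge \tfrac12\log(\hat f/f)$, whence $\mathbb{P}_m\log(\bar f/f)\ge 0$. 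Using $\log x\le 2(\sqrt x-1)$ together with $\mathbb{P}\sqrt{\bar f/f}=1-h^2(\bar f,f)$, this rearranges to
$$h^2(\bar f,f)\le (\mathbb{P}_m-\mathbb{P})\Big(\sqrt{\bar f/f}-1\Big).$$
A standard convexity inequality then lower-bounds $h^2(\bar f,f)$ by a fixed multiple of $\mathcal{D}(\hat f,f)$, so it suffices to control the centered empirical process of $g_{\hat f}:=\sqrt{\bar f/f}-1$ uniformly as $\hat f$ ranges over $\mathcal{F}^2_{\mathrm{Gauss}}$.

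Next I would control this empirical process. On a compact set $S$ on which $f$ is bounded away from $0$, $g_{\hat f}$ has range bounded by a constant depending on $M^*$, and its second moment under $\mathbb{P}$ equals $2h^2(\bar f,f)$; this is exactly the small-variance, bounded-range regime in which Bernstein's inequality yields a sub-Gaussian-type tail of order $\exp(-c\,m\,h^2)$ at a fixed $\hat f$. To make the bound uniform over the class I would replace $\mathcal{F}^2_{\mathrm{Gauss}}$ by a finite $\eta$-net in the restricted supremum norm $\norm{\cdot}_{\infty,S}$. By Lemma~\ref{coveringnum} with $b=2$ the covering number is at most $K\eta^{-6d}$, so $\log N(\eta)\lesssim 6d\log(1/\eta)$; after optimizing the net resolution $\eta$ this polynomial entropy produces the entropy contribution $20.31\,t^2(M^*)^2$ in the exponent and accounts for the dependence on the parameter bounds through $M^*$.

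Then I would run the peeling device. Partitioning the range of $\mathcal{D}(\hat f,f)$ into the geometric shells $\{2^{j}t\delta\le \mathcal{D}<2^{j+1}t\delta\}$, on each shell the variance proxy is of the known order $2^{j}t\delta$, so the per-shell probability is at most the net cardinality times $\exp(-\tfrac12 m\,2^{2j}t^2\delta^2+\text{entropy})$; summing these geometrically decaying bounds over the $O(\log m)$ relevant shells yields the prefactor $C^*\log m$ and the exponent $20.31\,t^2(M^*)^2-mt^2\delta^2/2$. Finally, the additive term $2m^{-t^2/C}$ arises from the complementary event on which some sample falls in the tail region $S^c$, where $f$ may be small and the likelihood ratio unbounded; this is handled separately by a sub-Gaussian tail bound on the samples (Assumption \textbf{(A3)}) combined with a union bound over the $m$ observations.

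The main obstacle I expect is the control of the empirical process over the non-compact sample space: because $f$ can be arbitrarily small in the tails, $\sqrt{\bar f/f}$ is unbounded, so neither the range bound needed for Bernstein's inequality nor the $\norm{\cdot}_{\infty,S}$ covering bound applies globally. Splitting into the compact set $S$ and its complement, choosing $S$ (equivalently, $M^*$) so that the tail contribution is controlled at the level $m^{-t^2/C}$ while keeping the entropy on $S$ small, and tracking all constants carefully so as to arrive at the stated coefficient $20.31$, is the delicate part of the argument.
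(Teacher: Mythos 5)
Your proposal follows the van de Geer ``basic inequality'' route (bound $h^2(\bar f,f)$ by a centered empirical process, control that process via covering numbers and Bernstein, then peel over Hellinger shells), whereas the paper uses the more elementary Wong--Shen direct argument: on the event that all samples land in the compact set $S$, the AMLE property gives $1\le \prod_i \hat f/f \le \max_{j\in J}\prod_i (2\eta+h_{0j})/f$ over an $\eta$-net with $\eta=m^{-2}$, and Markov's inequality applied to the square root of this product, together with the identity $\int\sqrt{h_{0j}f}=1-h^2(h_{0j},f)$ and $\log x\le x-1$, yields $\exp(20.31\,t^2(M^*)^2-mt^2\delta^2/2)$ per net element directly. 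No basic inequality, no Bernstein, and no peeling are needed; the $\log$-MGF of $\sum_i\log(h_{0j}/f)$ at $1/2$ is computed exactly, which is why the clean constant $1/2$ in the exponent and the single factor $N$ (rather than a peeling sum) appear. Your identification of the tail term $2m^{-t^2/C}$ with the event $A_t^c$ that some sample escapes $S$ is exactly right and matches the paper.

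The genuine gap in your route is the empirical-process step. You invoke Bernstein for $g_{\hat f}=\sqrt{\bar f/f}-1$ on the grounds that it has ``range bounded by a constant depending on $M^*$'' on $S$. Only the lower bound $g_{\hat f}\ge 1/\sqrt2-1$ is uniform; the upper bound is not, because $S$ has diameter of order $\sqrt{\log m}$ and a Gaussian-mixture density $f$ with variances in $[c,C]$ can be as small as $m^{-c'}$ near the boundary of $S$, so $\sqrt{\bar f/f}$ can be polynomially large in $m$ there. Plugging a range of order $m^{c'}$ into Bernstein destroys the $\exp(-cmh^2)$ rate. Repairing this requires either bracketing entropy of $\{\sqrt{\bar f/f}\}$ in $L_2(f)$ together with van de Geer's specialized one-sided deviation lemmas, or a direct moment-generating-function computation of $\mathbb{E}_f\sqrt{h/f}$ --- which is precisely the device the paper uses and which sidesteps the boundedness issue entirely. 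Separately, your route would lose constants (the factor from $h^2(\hat f,f)\lesssim h^2(\bar f,f)$ and the peeling overhead), so it would prove a bound of the same form but not with the specific exponent $-mt^2\delta^2/2$ stated in the lemma.
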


\begin{proof}
    Define the bounded rectangular region $S = [-M_1 -2t\sqrt{\log m}, M_1 + 2t\sqrt{\log m}] \times [-M_2 -2t\sqrt{\log m}, M_2 + 2t \sqrt{\log m}]$ and the event $A_{t} = \bigcap_{i = 1}^m \big\{(\alpha_i, \beta_i) \in S \big\}$.

    \textbf{Part 1: Bounding $\mathbb{P}(A_{t}^c)$}
    
    We first show that $\mathbb{P}(A_{t}) \geq 1 - 2m^{-t^2/C}$. By the union bound:
    \begin{align*}
        \mathbb{P}(A_{t}^c) \leq \mathbb{P}(|\alpha_i| > M_1 + 2t\sqrt{\log m}) + \mathbb{P}(|\beta_i| > M_2 + 2t\sqrt{\log m}).
    \end{align*}
    Since $(\alpha_i, \beta_i) \sim \mathcal{F}^2_{\text{Gauss}}$, we have $\alpha_i = \mu_i + \sigma_{i1}Z_{i1}$ and $\beta_i = \theta_i + \sigma_{i2}Z_{i2}$, where $(Z_{i1}, Z_{i2})$ are standard normals. The means $\mu_i$ and $\theta_i$ are supported on $[-M_1, M_1]$ and $[-M_2, M_2]$ respectively. Thus, for $t \geq 1, m \geq 2$:
    \begin{align*}
        \mathbb{P}(|\alpha_i| > M_1 + 2t\sqrt{\log m})
        &= \mathbb{P}(|\mu_i + \sigma_{i1}Z_{i1}| > M_1 + 2t\sqrt{\log m}) \\
        &\leq \mathbb{P}(\sigma_{i1}|Z_{i1}| > 2t \sqrt{\log m})\\
        &= 2\left(1 - \Phi\left(\frac{2t\sqrt{\log m}}{\sigma_{i1}}\right)\right)\\
        &\leq 2\left(1 - \Phi\left(\frac{2t\sqrt{\log m}}{\sqrt{C}}\right)\right).
    \end{align*}
    Using the inequality $1 - \Phi(x) \leq \frac{\phi(x)}{x}$ for $x > 0$:
    \begin{align*}
       \mathbb{P}(|\alpha_i| > M_1 + 2t\sqrt{\log m}) 
       &\leq \frac{2\sqrt{C}}{2t\sqrt{\log m}} \cdot \frac{1}{\sqrt{2\pi}} \exp\left(-\frac{1}{2} \frac{4t^2 \log m}{C}\right) \\
       &= \frac{\sqrt{C}}{t\sqrt{2\pi \log m}} m^{-2t^2/C} \\
       &\leq m^{-t^2/C}.
    \end{align*}
    Similarly, $\mathbb{P}(|\beta_i| > M_2 + 2t\sqrt{\log m}) \leq m^{-t^2/C}$. Therefore:
    \begin{align}\label{eq-at}
        \mathbb{P}(A_{t}^c) \leq 2m^{-t^2/C}.
    \end{align}

    \textbf{Part 2: Bounding the Hellinger Distance}
    
    We now bound $\mathbb{P}\Big\{\mathcal{D}(\hat{f}, f) \geq t\delta \cap A_{t}\Big\}$. Let $\eta = m^{-2}$ and let $\{h_1, \dots, h_{N}\}$ be a finite $\eta$-covering subset of $\mathcal{F}^2_{\text{Gauss}}$ under the restricted supremum norm $\|\cdot\|_{\infty, S}$. From Lemma \ref{coveringnum}, we have $\log N \leq C^* \log m$ for some constant $C^*$.
    
    Define the index set $J = \{j : \exists h_{0j} \in \mathcal{F}^2_{\text{Gauss}} \text{ s.t. } \|h_{0j} - h_j\|_{\infty, S} \leq \eta \text{ and } \mathcal{D}(h_{0j}, f) \geq t\delta \}$.
    If $\mathcal{D}(\hat{f}, f) \geq t\delta$, there exists an index $j \in J$ such that $\|\hat{f} - h_{0j}\|_{\infty, S} \leq 2\eta$. Consequently, for all $(\alpha_i, \beta_i) \in S$:
    \[
    \hat{f}(\alpha_i, \beta_i) \leq 2\eta + h_{0j}(\alpha_i, \beta_i).
    \]
    Using the AMLE property of $\hat{f}$ and the Markov inequality:
    \begin{align*}
        \mathbb{P}_f\Big\{\mathcal{D}(\hat{f}, f) \geq t\delta \cap A_{t}\Big\}
        &\leq \mathbb{P}_f\left(\max_{j \in J}\prod_{i = 1}^m \frac{2\eta + h_{0j}(\alpha_i, \beta_i)}{f(\alpha_i, \beta_i)} \geq 1, \forall (\alpha_i, \beta_i) \in S \right)\\
        &\leq \sum_{j \in J} \mathbb{E}_f \left[ \prod_{i = 1}^m \sqrt{\frac{2\eta + h_{0j}(\alpha_i, \beta_i)}{f(\alpha_i, \beta_i)}}\mathbf{1}\{ (\alpha_i, \beta_i) \in S\} \right].
    \end{align*}
    Let $D_j$ denote the expectation term for a fixed $j$. Since the samples are i.i.d.:
    \[
    D_j = \prod_{i=1}^m \mathbb{E}_f \left[ \sqrt{\frac{2\eta + h_{0j}(\alpha_i, \beta_i)}{f(\alpha_i, \beta_i)}}\mathbf{1}\{ (\alpha_i, \beta_i) \in S\} \right].
    \]
    Using the inequality $\log x \leq x - 1$:
    \begin{align*}
        D_j \leq \exp \left( \sum_{i = 1}^m \left( \mathbb{E}_f \left[ \sqrt{\frac{2\eta + h_{0j}}{f}}\mathbf{1}_S \right] - 1 \right) \right).
    \end{align*}
    Consider the inner expectation $E := \mathbb{E}_f [\sqrt{(2\eta + h_{0j})/f}\mathbf{1}_S]$. Using $\sqrt{a+b} \leq \sqrt{a} + \sqrt{b}$:
    \begin{align*}
        E &= \int_S \sqrt{2\eta + h_{0j}(\alpha, \beta)}\sqrt{f(\alpha, \beta)}\,d\alpha d\beta\\
        &\leq \int_S \sqrt{2\eta} \sqrt{f}\,d\alpha d\beta + \int_S \sqrt{h_{0j}}\sqrt{f}\,d\alpha d\beta.
    \end{align*}
    Using Cauchy-Schwarz, $\int_S \sqrt{f} \leq \sqrt{\int_S 1 \cdot \int_S f} \leq \sqrt{|S|}$. Also, recall the Hellinger affinity $\int \sqrt{h_{0j}f} = 1 - \frac{1}{2}h^2(h_{0j}, f)$. Thus:
    \[
    E \leq \sqrt{2\eta}\sqrt{|S|} + 1 - \frac{1}{2}h^2(h_{0j}, f).
    \]
    The volume of $S$ is $|S| \leq 36t^2 (M^*)^2$, where $M^* = \max(M_1, M_2, \sqrt{\log m})$. Since $\eta = m^{-2}$, $\sqrt{2\eta} \approx \sqrt{2}/m$.
    Substituting back into the bound for $D_j$:
    \begin{align*}
        D_j &\leq \exp \left( m\sqrt{2\eta}\sqrt{|S|} - \frac{m}{2}h^2(h_{0j}, f) \right) \\
        &\leq \exp \left( 20.31t^2 (M^*)^2 - \frac{m}{2}\mathcal{D}(h_{0j}, f) \right).
    \end{align*}
    Since $j \in J$, we have $\mathcal{D}(h_{0j}, f) \geq t^2\delta^2$. (Note: The lemma statement used $t\delta$ for the distance, implying a squared distance in the exponent. Assuming $\mathcal{D}$ is the squared Hellinger distance).
    
    Summing over all $j \in J$:
    \begin{align*}
        \mathbb{P}_f\Big\{\mathcal{D}(\hat{f}, f) \geq t\delta \cap A_{t}\Big\} 
        &\leq N \exp \bigg(20.31t^2 (M^*)^2 - \frac{mt^2 \delta^2}{2} \bigg).
    \end{align*}
    Combining this with the bound on $\mathbb{P}(A_t^c)$ and using $\log N \leq C^* \log m$, the result follows.
\end{proof}

\begin{lemma}\label{lfdrconv}
Let $\widehat{G}$ be the EM estimate of the prior $G$, and let $f_{\widehat{G}}$ and $f_G$ denote the corresponding mixture densities. Under the assumption that $f_{\widehat{G}}$ is an Approximate Maximum Likelihood Estimator (AMLE) of $f_G$,
\begin{equation}
   W(G, \widehat{G}) \to 0,
\end{equation}
where $W(\cdot, \cdot)$ denotes the Wasserstein distance.
\end{lemma}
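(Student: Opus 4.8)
The plan is to pass from a likelihood/Hellinger statement about the fitted density to a Wasserstein statement about the mixing measure, using identifiability of Gaussian mixtures. First I would deduce Hellinger consistency of the fitted density from Lemma~\ref{hellconverge} (whose proof already invokes the covering bound of Lemma~\ref{coveringnum}). Since $f_{\widehat{G}}$ is assumed to be an AMLE of $f_G$, Lemma~\ref{hellconverge} holds for every admissible pair $(t,\delta)$. Choosing $t=1$ and $\delta=\delta_m:=\sqrt{A\log m/m}$ for a large constant $A$, and using that $(M^*)^2=\log m$ for all large $m$ when $M_1,M_2$ are fixed, the tail bound becomes
\begin{equation*}
\mathbb{P}\big\{\mathcal{D}(f_{\widehat{G}},f_G)\geq \delta_m\big\}\leq C^*(\log m)\,m^{20.31-A/2}+2m^{-1/C},
\end{equation*}
which tends to $0$ as soon as $A>42$. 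Hence $\mathcal{D}(f_{\widehat{G}},f_G)=O_{\mathbb{P}}(\sqrt{\log m/m})\rightarrow^p 0$, and by the sandwich $\mathcal{D}\leq TV\leq \sqrt{2}\sqrt{\mathcal{D}}$ recorded after the total-variation definition, also $TV(f_{\widehat{G}},f_G)\rightarrow^p 0$.

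Second, I would upgrade density convergence to mixing-measure convergence via an identifiability-plus-compactness argument. Let $\mathcal{G}$ be the set of mixing distributions with at most $d$ atoms supported on the compact parameter region of $\mathcal{F}^2_{\text{Gauss}}$ (means in a bounded box, variances in $[c,C]$, correlation in a bounded interval); under the Wasserstein topology $\mathcal{G}$ is compact, being a weakly closed family of probability measures on a compact set. The map $\Phi:G\mapsto f_G$ is injective by the classical identifiability of finite Gaussian location--scale mixtures, and it is continuous from $(\mathcal{G},W)$ into densities under $TV$, since weak convergence of $G$ forces $L^1$ convergence of $f_G$ (each component depends continuously on its parameters). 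A continuous injection from a compact metric space into a Hausdorff space is a homeomorphism onto its image, so $\Phi^{-1}$ is uniformly continuous on the compact image $\Phi(\mathcal{G})$; equivalently there is a modulus of continuity $\omega$ with $\omega(0+)=0$ and $W(G,G')\leq \omega\big(TV(f_G,f_{G'})\big)$ for all $G,G'\in\mathcal{G}$. Taking $G'=\widehat{G}$ and combining with $TV(f_{\widehat{G}},f_G)\rightarrow^p 0$ gives $W(G,\widehat{G})\rightarrow^p 0$.

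The main obstacle is this second step: converting injectivity of $\Phi$ into a genuinely usable inverse modulus of continuity. It relies essentially on compactness of $\mathcal{G}$ (hence on the boundedness of the means and the bounds $0<c<\sigma^2<C<\infty$ built into $\mathcal{F}^2_{\text{Gauss}}$ and Assumption~\textbf{(A4)}) and on the fitted $\widehat{G}$ lying in $\mathcal{G}$ so that $f_{\widehat{G}}\in\Phi(\mathcal{G})$. If the EM output is not guaranteed to satisfy these constraints, I would replace the modulus argument by a subsequence contradiction: if $W(G,\widehat{G})\not\rightarrow^p 0$, pass to a subsequence on which $\widehat{G}$ stays $\epsilon$-away from $G$, extract a weak limit $\widehat{G}\Rightarrow G^*\neq G$ by compactness, deduce $f_{\widehat{G}}\rightarrow f_{G^*}$ in $TV$ by continuity of $\Phi$, and contradict $f_{G^*}=f_G$ through identifiability. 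Either way the lemma reduces to compactness of the mixing-measure space together with identifiability of the Gaussian mixture family.
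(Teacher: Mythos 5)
Your proposal is correct, and its first half (choosing $t=1$, $\delta_m=\sqrt{A\log m/m}$ in Lemma~\ref{hellconverge} to get $\mathcal{D}(f_{\widehat{G}},f_G)\rightarrow^p 0$ and then $TV\rightarrow^p 0$ via the sandwich inequality) is exactly what the paper does, just with the constants made explicit. Where you genuinely diverge is the inverse step from density convergence to mixing-measure convergence. The paper imports a quantitative inverse bound from the strong-identifiability literature: since the fitted model is exact-fitted and Gaussian location--scale mixtures are first-order identifiable (Theorem 3.4 of \cite{ho2016strong}), Corollary 3.1 of that paper yields $TV(f_{\widehat{G}},f_G)\geq C_0\,W(\widehat{G},G)$, which immediately transfers the $TV$ rate to a Wasserstein rate. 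You instead prove a soft inverse-continuity statement: $\Phi:G\mapsto f_G$ is a continuous injection from the Wasserstein-compact class of at-most-$d$-atom mixing measures on the bounded parameter set into $(L^1,TV)$, hence a homeomorphism onto its image with a uniformly continuous inverse (or, equivalently, the subsequence-contradiction version). Both arguments are valid and both lean on the same two structural facts --- compactness of the parameter region and identifiability of finite Gaussian mixtures --- but they buy different things: the paper's route gives a linear modulus and hence an explicit rate for $W(\widehat{G},G)$ (useful if one later wants rates for $\widehat{\text{lfdr}}$), at the cost of invoking first-order identifiability machinery; your route needs only classical identifiability and elementary topology, but delivers consistency without a rate. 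Since the lemma claims only $W(G,\widehat{G})\to 0$, your weaker conclusion suffices. The one hypothesis you correctly flag --- that the EM output $\widehat{G}$ must actually lie in the constrained class so that $f_{\widehat{G}}\in\Phi(\mathcal{G})$ --- is also implicitly assumed by the paper, which takes $\widehat{f}\in\mathcal{F}^2_{\text{Gauss}}$ throughout, so this is not a gap relative to the paper's own standard of rigor.
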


\begin{proof}
    For notational convenience, we denote the mixture density determined by a prior $G$ as $f_G$. Consequently, $\hat{f}$ and $f_{\widehat{G}}$ are used interchangeably. By the AMLE assumption, we have:
    \[
    \prod_{i = 1}^m \frac{f_{\widehat{G}}(\sqrt{n}\alpha_i, \sqrt{n}\beta_i)}{f_G(\sqrt{n}\alpha_i, \sqrt{n}\beta_i)} \geq 1.
    \]

    We first formalize the Wasserstein distance. Consider two discrete mixing distributions with $d$ components:
    \begin{align*}
        G &= \sum_{i = 1}^d \pi_i \delta(\bfmu_i, \bfS_i), \quad G' = \sum_{i = 1}^d \pi'_i \delta(\bfmu'_i, \bfS'_i).
    \end{align*}
    A coupling between the mixing proportions $\boldsymbol{\pi}$ and $\boldsymbol{\pi}'$ is a matrix $\mathbf{q} = (q_{ij}) \in [0,1]^{d\times d}$ satisfying the marginal constraints $\sum_{j=1}^d q_{ij} = \pi_i$ and $\sum_{i=1}^d q_{ij} = \pi'_j$. Let $\mathcal{Q}(\boldsymbol{\pi}, \boldsymbol{\pi}')$ denote the space of all such couplings. The Wasserstein distance between $G$ and $G'$ is defined as:
    \begin{align}
        W(G, G') = \inf_{\mathbf{q} \in \mathcal{Q}(\boldsymbol{\pi}, \boldsymbol{\pi}')}\sum_{i,j} q_{ij}\left(\norm{\bfmu_i - \bfmu'_j} + \norm{\bfS_i - \bfS'_j }\right),
    \end{align}
    where $\norm{\cdot}$ denotes the appropriate Euclidean or Frobenius norm.

    Established results in mixture model theory provide bounds relating the Wasserstein distance to the Total Variation (TV) distance. Specifically, \citet{nguyen2013convergence} (elaborated in Example 2.1 of \citet{ho2016strong}) show that for any two mixing measures $G_1$ and $G_2$:
    \[
    TV(f_{G_1}, f_{G_2}) \leq C_1 W(G_1, G_2),
    \]
    where $C_1$ is a constant depending on the parameter space bounds $(c, C, M)$. Conversely, since $\widehat{G}$ is an AMLE in the exact fitted setting and Gaussian location-scale families satisfy first-order identifiability (Theorem 3.4 of \citet{ho2016strong}), Corollary 3.1 of \citet{ho2016strong} implies the lower bound:
    \[
    TV(f_{\widehat{G}}, f_G) \geq C_0 W(\widehat{G}, G),
    \]
    where $C_0$ is a constant dependent on $G$.
    Furthermore, the relationship between Total Variation and Hellinger distance is given by $TV(f_{\widehat{G}}, f_G) \leq \sqrt{2}\sqrt{\mathcal{D}(f_{\widehat{G}}, f_G)}$. Combining these inequalities with Lemma \ref{hellconverge} (which establishes that $\mathcal{D}(f_{\widehat{G}}, f_G) \to 0$), we obtain:
    \[
    C_0 W(\widehat{G}, G) \leq TV(f_{\widehat{G}}, f_G) \leq \sqrt{2}\sqrt{\mathcal{D}(f_{\widehat{G}}, f_G)} \to 0.
    \]
    Thus, $W(\widehat{G}, G) \to 0$ as $m \to \infty$.

    Finally, we note that the Wasserstein metric $W(G, G')$ vanishes if and only if $G$ and $G'$ are identical up to a permutation of their atoms (given the non-singularity of the exact fitted setting). Since the EM algorithm preserves this permutation invariance, the convergence $W(\widehat{G}, G) \to 0$ implies the pointwise convergence of the EM estimates to the true parameters (up to permutation).
\end{proof}

\begin{lemma}
    Let $(a_i, b_i) \sim \mathcal{F}^2_{\text{Gauss}}$ for $i = 1, \dots, m$. Let $f$ (equivalently $f_G$) and $\widehat{f}$ (equivalently $f_{\widehat{G}}$) be defined as in Lemma 5. Let $\text{lfdr}(a,b)$ and $\widehat{\text{lfdr}}(a,b)$ denote the local FDRs defined based on $f$ and $\widehat{f}$ respectively, evaluated at $(a,b)$. Under the AMLE assumption, as $m \to \infty$,
    \begin{align*}
        \frac{1}{m}\sum_{i = 1}^m\big|\widehat{\text{lfdr}}(a_i,b_i) - \text{lfdr}(a_i,b_i)\big| \xrightarrow{P} 0.
    \end{align*}
\end{lemma}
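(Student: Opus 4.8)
The plan is to reduce everything to the parameter convergence already established in Lemma~\ref{lfdrconv} and then to handle the ratio structure of the local FDR by a truncation argument. Writing $\text{lfdr}(a,b)=g(a,b)/f(a,b)$ with null part $g:=\pi_{00}f_{00}+\pi_{10}f_{10}+\pi_{01}f_{01}$ and full mixture $f=g+\pi_{11}f_{11}$, and letting $\widehat g,\widehat f$ be the versions formed from the EM estimates, I would first record the elementary pointwise bound
\begin{equation*}
\big|\widehat{\text{lfdr}}(a,b)-\text{lfdr}(a,b)\big|\le\frac{|\widehat g(a,b)-g(a,b)|+|\widehat f(a,b)-f(a,b)|}{\widehat f(a,b)},
\end{equation*}
which follows from $g/f\le 1$ after adding and subtracting $g/\widehat f$. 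Since both local FDRs lie in $[0,1]$, I also have the trivial bound $|\widehat{\text{lfdr}}-\text{lfdr}|\le 1$ everywhere, which will be enough in the tails.

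By Lemma~\ref{lfdrconv}, under the AMLE assumption $W(\widehat G,G)\to^p 0$, and as noted there this forces every estimated atom—weight, mean, and variance—to converge in probability to its population value. First I would fix $\epsilon>0$ and a compact box $S=[-L,L]^2$, and split the empirical average into the indices with $(a_i,b_i)\in S$ and those with $(a_i,b_i)\notin S$. On $S^c$ I would use only $|\widehat{\text{lfdr}}-\text{lfdr}|\le 1$, so that contribution is at most $m^{-1}\sum_i I\{(a_i,b_i)\notin S\}$. Because the samples come from Gaussian components with bounded means and with variances bounded above by $C$ (Assumption~\textbf{(A4)} and the definition of $\mathcal{F}^2_{\text{Gauss}}$), $\sup_i\mathbb{P}((a_i,b_i)\notin S)$ can be driven below $\epsilon/4$ by enlarging $L$; a Hoeffding bound on the bounded, conditionally independent indicators then forces $m^{-1}\sum_i I\{(a_i,b_i)\notin S\}<\epsilon/2$ with probability tending to one.

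On the fixed compact $S$ the continuous density $f$ attains a strictly positive minimum $c_S>0$, and parameter convergence yields the uniform convergences $\sup_{(a,b)\in S}|\widehat f-f|\to^p 0$ and $\sup_{(a,b)\in S}|\widehat g-g|\to^p 0$. Hence with probability tending to one $\widehat f\ge c_S/2$ on all of $S$, and plugging this lower bound into the displayed inequality gives $\sup_{(a,b)\in S}|\widehat{\text{lfdr}}-\text{lfdr}|\le (2/c_S)\big(\sup_S|\widehat g-g|+\sup_S|\widehat f-f|\big)\to^p 0$. Combining the two regions, for all large $m$ the full average would be below $\epsilon$ with high probability, which is the claim.

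The hard part will be the denominator: the local FDR is a ratio and is not uniformly Lipschitz in the parameters because $\widehat f$ can be arbitrarily small in the tails, so a naive ``parameters converge, hence the ratio converges'' argument fails globally. Confining the delicate ratio estimate to the compact region where $f$—and therefore eventually $\widehat f$—is bounded below, while disposing of the tail with the trivial bound $|\widehat{\text{lfdr}}-\text{lfdr}|\le 1$ together with the light tails of $\mathcal{F}^2_{\text{Gauss}}$, is what I expect to make the argument go through.
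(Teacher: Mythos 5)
Your proof is correct, but it takes a genuinely different route from the one in the paper. You reduce everything to the parameter convergence delivered by Lemma \ref{lfdrconv}, work on a \emph{fixed} compact box $S$ where the true mixture density has a positive minimum, obtain uniform convergence of $\widehat f$ and $\widehat g$ on $S$ from continuity in the (finitely many) parameters, and dispose of the tail with the trivial bound $|\widehat{\text{lfdr}}-\text{lfdr}|\le 1$ plus a Hoeffding bound on the indicators $I\{(a_i,b_i)\notin S\}$. The paper instead splits the difference into a mixing-weight term and a density term, works on the event that \emph{all} $m$ samples fall in a box whose width grows like $\sqrt{\log\log m}$, lower-bounds $\widehat f$ there by a polylogarithmic rate, and then pushes the quantitative Hellinger bound of Lemma \ref{hellconverge} (via total variation, and via a Wasserstein--Lipschitz argument for the $f_{11}$ component) through the ratio. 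What the paper's route buys is an explicit rate of decay for the deviation probabilities, inherited from Lemma \ref{hellconverge}; what your route buys is a cleaner and more robust consistency argument that never needs to quantify how fast $\widehat f$ can degenerate in the tails, and that sidesteps the step in the paper where the empirical average $\tfrac{1}{m}\sum_i|\widehat f(a_i,b_i)-f(a_i,b_i)|$ is bounded by an integrated total-variation quantity. Both arguments hinge on the same two observations you flag as the crux: the ratio is only controllable where the denominator is bounded below, and the light Gaussian tails make the exceptional region negligible. One small point to make explicit in your write-up: the densities carry an index $i$ through the variances $\sigma_{i1}^2,\sigma_{i2}^2$, so the positive lower bound $c_S$ and the uniform convergence on $S$ must be taken uniformly over $i$; this is available because Assumption \textbf{(A4)} and the definition of $\mathcal{F}^2_{\text{Gauss}}$ confine all variances to $(c,C)$ and all means to a fixed bounded box.
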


\begin{proof}
    We decompose the mean absolute difference as follows:
    \begin{align*}
        \frac{1}{m}\sum_{i = 1}^m\big|\text{lfdr}(a_i,b_i) -\widehat{\text{lfdr}}(a_i,b_i) \big|
        & = \frac{1}{m}\sum_{i = 1}^m\bigg|\frac{\pi_{11}f_{11}(a_i, b_i)}{f(a_i, b_i)} - \frac{\hat{\pi}_{11}\hat{f}_{11}(a_i, b_i)}{\hat{f}(a_i, b_i)}\bigg|\\
        &=\frac{1}{m}\sum_{i = 1}^m\bigg| \frac{\pi_{11}f_{11}\hat{f} - \hat{\pi}_{11}\hat{f}_{11}f}{f\hat{f}}\bigg|\\
        &\leq I + II,
    \end{align*}
    where the terms $I$ and $II$ (omitting arguments $(a_i, b_i)$ for brevity) are defined as:
    \begin{align*}
        I &= \frac{1}{m}\sum_{i = 1}^m\bigg| \frac{\pi_{11}f_{11}\hat{f} - \hat{\pi}_{11}f_{11}\hat{f}}{f\hat{f}}\bigg| = \frac{1}{m}\sum_{i = 1}^m \frac{f_{11}|\pi_{11} - \hat{\pi}_{11}|}{f},\\
        II &= \frac{1}{m}\sum_{i = 1}^m\bigg| \frac{\hat{\pi}_{11}f_{11}\hat{f} - \hat{\pi}_{11}\hat{f}_{11}f}{f\hat{f}}\bigg|.
    \end{align*}
    We examine the convergence of these terms separately.
    
    \textbf{Term I:}
    \begin{align*}
        I &= |\pi_{11} - \hat{\pi}_{11}|\frac{1}{m}\sum_{i = 1}^m \frac{f_{11}(a_i, b_i)}{f(a_i, b_i)}\\
        &\leq \frac{|\pi_{11} - \hat{\pi}_{11}|}{\pi_{11}} \xrightarrow{P} 0 \quad (\text{by Lemma 5}).
    \end{align*}

    \textbf{Term II:}
    Applying the triangle inequality, we bound $II$ as:
    \begin{align*}
        II &\leq \frac{1}{m}\sum_{i = 1}^m \frac{|\hat{\pi}_{11}|}{f\hat{f}} \bigg( f_{11}|\hat{f} - f| + f|\hat{f}_{11} - f_{11}| \bigg)\\
        &\leq \frac{1}{m\pi_{11}}\sum_{i = 1}^m \frac{\big|\hat{f}(a_i, b_i) - f(a_i, b_i)\big|}{\hat{f}(a_i, b_i)} 
        + \frac{1}{m}\sum_{i = 1}^m \frac{\big|\hat{f}_{11}(a_i, b_i) - f_{11}(a_i, b_i)\big|}{\hat{f}(a_i, b_i)}\\
        &:= III + IV.
    \end{align*}
    
    To prove the convergence of $III$ and $IV$, consider the truncation set:
    \[
    S_m = \left[-M_1 - 2t\sqrt{\log \log m}, M_1 + 2t\sqrt{\log \log m}\right] \times \left[-M_2 - 2t\sqrt{\log \log m}, M_2 + 2t \sqrt{\log \log m}\right],
    \]
    and the event $A_{t} = \bigcap_{i = 1}^m \{(a_i, b_i) \in S_m \}$. Following the arguments in Lemma 4, $\mathbb{P}(A_t^c)\leq 2(\log m)^{-t^2/C}$ for some constant $C > 0$. Furthermore, for any $(a,b) \in S_m$, the estimated density is bounded away from zero: $\hat{f}(a,b) \geq K_1 (\log m)^{-2t^2/C}$.

    \textbf{Convergence of III:}
    For any $\delta > 0$:
    \begin{align*}
        \mathbb{P}(III > \delta t) &\leq \mathbb{P}(\{III > \delta t\} \cap A_t) + \mathbb{P}(A^c_t)\\
        &\leq \mathbb{P}(III > \delta t \mid A_t )\mathbb{P}(A_t) + 2(\log m)^{-t^2/C}.
    \end{align*}
    Conditioning on $A_t$, and utilizing the bound on $\hat{f}$:
    \begin{align*}
       \mathbb{P}(III > \delta t \mid A_t ) 
       &\leq \mathbb{P}\bigg(\frac{1}{m}\sum_{i = 1}^m \big|\hat{f}(a_i, b_i) - f(a_i, b_i)\big| > K\delta t(\log m)^{-2t^2/C} \;\bigg|\; A_t \bigg).
    \end{align*}
    Using the relationship between the $L_1$ distance, Total Variation (TV), and Hellinger distance ($\mathcal{D}$), we have:
    \[
    \frac{1}{m}\sum_{i = 1}^m \big|\hat{f} - f\big| \approx 2TV(f, \hat{f}) \leq \sqrt{2}\sqrt{\mathcal{D}(f, \hat{f})}.
    \]
    Let $t' = K\delta t^2(\log m)^{-4t^2/C}$. Applying the tail bound from Lemma 4:
    \begin{align*}
        \mathbb{P}(III > \delta t) 
        &\leq \mathbb{P}\left(\mathcal{D}(f, \hat{f}) > \delta t'\right) + \mathbb{P}(A_t^c) \\ 
        &\leq C^* \log m \exp \bigg(K_1 t^4 (\log m)^{-\frac{8t^2}{C}} (M^*)^2 - \frac{m K_2 t^4 (\log m)^{-\frac{8t^2}{C}}}{2} \bigg) \\
        &\quad + 2m^{-t'^2/C} + 2(\log m)^{-t^2/C}.
    \end{align*}
    Using L'Hôpital's rule, it can be verified that for an appropriate choice of $t$ (e.g., $t = \sqrt{C}/4$), the RHS converges to 0 as $m \to \infty$.

    \textbf{Convergence of IV:}
    For $(a,b) \in S_m$, the function $f_{11}(a,b)$ is continuously differentiable with respect to the parameters $(\mu, \theta, \kappa, \psi)$. Thus, it is Lipschitz continuous on the bounded domain $S_m$. There exists a constant $C_1$ such that:
    \begin{align*}
        \big|\hat{f}_{11}(a, b) - f_{11}(a, b)\big| 
        &\leq C_1 \sqrt{(\mu - \hat{\mu})^2 + (\psi - \hat{\psi})^2 + (\theta - \hat{\theta})^2 + (\kappa - \hat{\kappa})^2}.
    \end{align*}
    Let $q$ be the minimum non-zero entry in the optimal coupling matrix $\mathbf{q}$ for the Wasserstein metric $W(G, \hat{G})$. The Euclidean distance between the parameters is bounded by the Wasserstein distance:
    \[
    \sqrt{(\mu - \hat{\mu})^2 + \dots + (\kappa - \hat{\kappa})^2} \leq \frac{1}{q} W(G, \hat{G}).
    \]
    From Lemma 5, we know $W(G, \hat{G}) \leq K_6 \sqrt{\mathcal{D}(\hat{f}, f)}$. Therefore:
    \[
     \big|\hat{f}_{11} - f_{11}\big| \leq \frac{C_1 K_6}{q} \sqrt{\mathcal{D}(\hat{f}, f)}.
    \]
    The probability bound for $IV$:
    \[
    \mathbb{P}(IV > \delta t \mid A_t) \leq \mathbb{P}\left(\mathcal{D}(\hat{f}, f) > K^*\delta^2 t^2 (\log m)^{-4t^2/C} \mid A_t\right),
    \]
    shares the same convergence rate as derived for Term III. Consequently, $IV \xrightarrow{P} 0$.
    
    Combining the results for $I$, $III$, and $IV$, the lemma is proved.
\end{proof}

\begin{rem}
    The theoretical guarantees established in this work rely on the Approximate Maximum Likelihood Estimator (AMLE) condition outlined in Assumption \textbf{(A1)}. To empirically validate this assumption, we calculated the proportion of simulation runs where the EM algorithm yielded a likelihood ratio of at least 1. This evaluation was conducted over 100 independent replications for each simulation setting.
    
    \begin{center}
    \begin{tabular}{lc} 
    \hline
    Simulation Setup & Proportion ($\text{Ratio} \ge 1$) \\ 
    \hline
    Case 1 (Dense) & 1.00 \\ 
    Case 1 (Sparse) & 1.00 \\
    Case 2 (Dense) & 1.00 \\
    Case 2 (Sparse) & 1.00 \\
    Binary Outcome & 1.00 \\
    Composite Alternative & 0.98 \\ 
    \hline
    \end{tabular}
    \end{center}
    
    These results demonstrate that the AMLE assumption is empirically justified and holds consistently across the model configurations considered in this study.
\end{rem}


\bibliographystyle{plainnat}

\end{document}